\newcolumntype{L}[1]{>{\raggedright\arraybackslash}p{#1}}
\newcolumntype{C}[1]{>{\centering\arraybackslash}m{#1}}
\newcolumntype{R}[1]{>{\raggedleft\arraybackslash}p{#1}}
\newcommand{\TODO}[1]{\typeout{TODO: \the\inputlineno: #1}\textbf{{\color{red}[[[ #1 ]]]}}}
\newcommand{\sample}{\textnormal{\textsf{Sample}}}
\newcommand{\esample}{\textnormal{\textsf{SampleEdge}}}
\newcommand{\Ind}{\textnormal{\textsf{Ind}}}
\newcommand{\dtv}{d_{\rm TV}}
\newcommand{\abs}[1]{\left\vert#1\right\vert}
\newcommand{\numP}{{\textnormal{\#\textbf{P}}}}
\newcommand{\numSFO}{\textnormal{\#\textsc{SFO}}}
\newtheorem{theorem}{Theorem}[section]
\newtheorem{claim}[theorem]{Claim}
\newtheorem*{claim*}{Claim}
\newtheorem{lemma}[theorem]{Lemma}
\theoremstyle{definition}
\newtheorem{definition}[theorem]{Definition}
\newtheorem*{remark*}{Remark}
\renewcommand{\Pr}[2][]{ \ifthenelse{\isempty{#1}}
  {\mathop{\mathbf{Pr}}\left[#2\right]} {\mathop{\mathbf{Pr}}_{#1}\left[#2\right]} }
\newcommand{\E}[2][]{ \ifthenelse{\isempty{#1}}
  {\mathop{\mathbf{E}}\left[#2\right]}
  {\mathop{\mathbf{E}}_{#1}\left[#2\right]} }
\newcommand{\Var}[2][]{ \ifthenelse{\isempty{#1}}
  {\mathbf{\mathbf{Var}}\left[#2\right]}
  {\mathbf{\mathbf{Var}}_{#1}\left[#2\right]} }
\def\^#1{\mathbb{#1}} 
\def\*#1{\mathbf{#1}} 
\def\+#1{\mathcal{#1}} 
\def\-#1{\mathrm{#1}} 
\def\=#1{\boldsymbol{#1}} 
\newcommand{\defeq}{\coloneq}
\newcommand{\eps}{\varepsilon}
\title{Sink-free orientations: a local sampler with applications}
\author{Author(s)}
\author{Konrad Anand, Graham Freifeld, Heng Guo, Chunyang Wang, Jiaheng Wang}
\address[Konrad Anand, Graham Freifeld, and Heng Guo]{School of Informatics, University of Edinburgh, Informatics Forum, Edinburgh, EH8 9AB, United Kingdom.}
\address[Chunyang Wang]{State Key Laboratory for Novel Software Technology, New Cornerstone Science Laboratory, Nanjing University, 163 Xianlin Avenue, Nanjing, Jiangsu Province, China.}
\address[Jiaheng Wang]{Faculty of Informatics and Data Science, University of Regensburg, Bajuwarenstra{\ss}e 4, 93053 Regensburg, Germany.}
\thanks{This project has received funding from the European Research Council (ERC) under the European Union's Horizon 2020 research and innovation programme (grant agreement No.~947778). 
Jiaheng Wang also acknowledges support from the ERC (grant agreement No. 101077083).}
\date{}
\begin{document}

\begin{abstract}
  \sloppy
  For sink-free orientations in graphs of minimum degree at least $3$,
  we show that there is 
  a deterministic approximate counting algorithm that runs in time $O((n^{73}/\eps^{72})\log(n/\eps))$,
  a near-linear time sampling algorithm,
  and a randomised approximate counting algorithm that runs in time $O((n/\eps)^2\log(n/\eps))$, where $n$ denotes the number of vertices of the input graph and $0<\eps<1$ is the desired accuracy.
  All three algorithms are based on a local implementation of the sink popping method (Cohn, Pemantle, and Propp, 2002) under the partial rejection sampling framework (Guo, Jerrum, and Liu, 2019). 
\end{abstract}


\allowdisplaybreaks
\maketitle

\section{Introduction}

The significance of counting has been recognised in the theory of computing since the pioneering work of Valiant \cite{Val79,Val79a}.
In the late 80s, a number of landmark approximate counting algorithms \cite{JS89,DFK91,JS93} were discovered.
A common ingredient of these algorithms is the computational equivalence between approximate counting and sampling for self-reducible problems \cite{jerrum1986random}.
The reduction from counting to sampling decomposes the task into a sequence of marginal probability estimations,
each of which is tractable for sampling techniques such as Markov chains.
However, while only the marginal probability of one variable is in question,
simulating Markov chains requires keeping track of the whole state of the instance, and thus is obviously wasteful.
It is more desirable to draw samples while accessing only some local structure of the target variable.
We call such algorithms local samplers.

The first such local sampler was found by Anand and Jerrum \cite{anand2021perfect},
who showed how to efficiently generate perfect local samples for spin systems even when the underlying graph is infinite.
Using local information is essential here as it is not possible to perfectly simulate the whole state.
Subsequently, Feng, Guo, Wang, Wang, and Yin \cite{feng2023towards} found an alternative local sampler, namely the so-called coupling towards the past (CTTP) method, which yields local implementations of rapid mixing Markov chains.
It is also observed that sufficiently efficient local samplers lead to immediate derandomisation via brute-force enumeration.
Moreover, local samplers are crucial to obtain sub-quadratic time approximate counting algorithms for spin systems \cite{AFFGW25}.
Thus, local samplers are highly desirable algorithms as they can lead to fast sampling, fast approximate counting, and deterministic approximate counting algorithms.

Guo, Jerrum, and Liu \cite{GJL19} introduced partial rejection sampling (PRS) as yet another efficient sampling technique.
This method generalises the cycle-popping algorithm for sampling spanning trees \cite{Wil96} and the sink-popping algorithm for sampling sink-free orientations \cite{CPP02}.
It also has close connections with the Lov\'{a}sz local lemma \cite{EL75}.
For extremal instances (in the sense of \cite{KS11}), PRS is just the celebrated Moser-Tardos algorithm for the constructive local lemma \cite{moser2010constructive}.
The most notable application of PRS is the first fully polynomial-time randomised approximation scheme (FPRAS) for all-terminal network reliability \cite{GJ19}. 
On the other hand, it is still open if all-terminal reliability and counting sink-free orientations admit deterministic fully polynomial-time approximation schemes (FPTASes). 
Thus, in view of the aforementioned derandomisation technique \cite{feng2023towards}, 
a local implementation of PRS is a promising way to resolve these open problems.

In this paper, we make some positive progress for sink-free orientations (SFOs).
Given an undirected graph $G=(V,E)$, a sink-free orientation of $G$ is an orientation of edges such that each vertex has at least one outgoing edge. 
SFOs were first studied by Bubley and Dyer \cite{BD97a} as a restricted case of \textsc{Sat}.\footnote{As a side note, we remark that SFOs are also introduced in the context of distributed computing under the name of sinkless orientations, 
where they are used to give a lower bound for the distributed Lov\'{a}sz local lemma \cite{BFHKLRSU16}.}
They showed that exact counting of SFOs is $\numP$-complete, and thus is unlikely to have a polynomial-time algorithm. 
For approximate counting and sampling, in \cite{BD97b}, they showed that a natural Markov chain has an $O(m^3)$ mixing time, where $m$ is the number of edges. 
Later, Cohn, Pemantle, and Propp \cite{CPP02} introduced an exact sampler, namely the aforementioned sink-popping algorithm that runs in $O(nm)$ time in expectation, where $n$ is the number of vertices.
Using the PRS framework, Guo and He \cite{guo2020tight} improved the running time of sink-popping to $O(n^2)$,
and constructed instances where this running time is tight.
It is open whether a faster sampling algorithm or an FPTAS exists.

Our main result is a local sampler based on PRS for SFOs.
Using this local sampler, for graphs of minimum degree $3$,
we obtain a deterministic approximate counting algorithm that runs in time $O((n^{73}/\eps^{72})\log(n/\eps))$,
a near-linear time sampling algorithm,
and a randomised approximate counting algorithm that runs in time $O((n/\eps)^2\log(n/\eps))$,
where $\eps$ is the given accuracy.
All three algorithms appear difficult to obtain using previous techniques.
We will describe the results in more detail in the next section.

\subsection{Our contribution and technique overview}

Our local sampler works for a slight generalisation of SFOs,
which are intermediate problems required by the standard counting to sampling reduction \cite{jerrum1986random}.
In these problems, a subset $S$ of vertices is specified, which are required to be sink-free,
and the task is to estimate the probability of a vertex $v$ not in $S$ not being a sink.

Before describing our technique, let us first review the sink-popping algorithm,
(which is a special case of PRS and the same as the Moser-Tardos algorithm \cite{moser2010constructive} as the instance is extremal).
We orient each edge uniformly at random.
As long as there is a sink in $S$, we select one such vertex, arbitrarily, and rerandomise all edges incident to it, until there is no sink belong to $S$.

Our key observation is that it is unnecessary to simulate all edges to decide if $v$ is a sink.
In particular, if, at any point of the execution of the algorithm, $v$ is a sink,
then no adjacent edges will ever be resampled and $v$ stays a sink till the algorithm finishes.
On the other hand, if at any point $v$ belongs to a cycle, a path leading to a cycle, or a nonempty path leading to some vertex not in $S$, 
then the orientations of all edges involved will not be resampled, and $v$ stays a non-sink until the algorithm terminates.
Thus, this observation gives us an early termination criterion for determining whether $v$ is a sink or not. 
Moreover, since in the sink-popping algorithm, the order of sinks popped can be arbitrary,
we can reveal the random orientation of edges strategically, and pop sinks if necessary.
To be more precise, we first reveal the edges adjacent to $v$ one by one.
Once there is an outgoing edge $(v,u)$, we then move to $u$ and repeat this process.
If any sink is revealed, we erase the orientations of all its adjacent edges and backtrack.
Eventually, one of the two early termination rules above will kick in, and this gives us a local sample.

Ideally, we want our local sampler to run in $O(\log n)$ time, where $n$ is the number of vertices.
Unfortunately, the one described above does not necessarily terminate this fast.
To see this, consider a sequence of degree $2$ vertices, where at each step there is equal probability to move forward or backtrack.
Resolving such a path of length $\ell$ would require $\Theta(\ell^2)$ time.
On the other hand, when the minimum degree of the input graph is at least $3$, the length of the path followed by the sampler forms a submartingale. 
The vertex $v$ can be a sink only if this path has length $0$.
Thus, once the length of the path is at least $C\log n$ for some constant $C$,
the probability of $v$ being a sink is very small.
This allows us to truncate the local sampler with only a small error.

The FPTAS for \numSFO{} follows from the derandomisation method of \cite{feng2023towards} to the truncated local sampler.
By $\varepsilon$-approximation,
we mean an estimate $\widetilde{Z}$ such that $1-\eps\le\frac{\widetilde{Z}}{Z}\le 1+\eps$, where $Z$ is the target quantity.
Also, all our algorithms work for not necessarily simple graphs.

\begin{theorem}[deterministic approximate counting]\label{thm:sfo-deterministic-counting}
  For graphs with minimum degree at least $3$, there exists a deterministic algorithm that, given $0<\eps<1$, outputs an $\varepsilon$-approximation to the number of sink-free orientations with running time $O((n^{73}/\eps^{72})\log(n/\varepsilon))$, where $n$ is the number of vertices. 
\end{theorem}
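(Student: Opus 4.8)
The plan is to realise the standard reduction from approximate counting to sampling (in the spirit of \cite{jerrum1986random}), but to feed it the truncated \emph{local} sampler rather than a global one, and then to derandomise the whole pipeline via brute-force enumeration over the bounded-radius local neighbourhoods, following \cite{feng2023towards}. Concretely, fix an ordering $e_1,\dots,e_m$ of the edges and write the number of sink-free orientations $Z$ as a telescoping product $Z = 2^m \prod_{i} p_i$, where $p_i$ is the marginal probability that $e_i$ receives a prescribed orientation conditioned on $e_1,\dots,e_{i-1}$ already being oriented; equivalently, after collapsing oriented edges one is left with the intermediate ``sink-free on a specified subset $S$'' problem described before the theorem, whose marginals are exactly what the local sampler estimates. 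Each $p_i$ is bounded away from $0$ and $1$ (a constant, using minimum degree $\ge 3$), so $O(\log(1/\eps) + \log m)$ of multiplicative accuracy per factor and a union bound suffice to get an $\eps$-approximation to $Z$.

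First I would record the performance of the truncated local sampler: by the submartingale argument sketched in the overview, truncating the explored path at length $C\log n$ introduces total-variation (equivalently, marginal) error at most $1/\poly(n)$, and the truncated sampler only ever inspects edges within the connected subgraph it grows, whose size is bounded by $n^{O(1)}$ (in fact the relevant quantity is that the \emph{decision} depends only on a neighbourhood of radius $O(\log n)$). Second I would invoke the derandomisation lemma of \cite{feng2023towards}: because the truncated sampler's output for the query variable is a deterministic function of the random bits restricted to a local window, and because the failure/truncation probability is $1/\poly(n)$, one can compute each marginal $p_i$ \emph{deterministically} to within $1/\poly(n)$ by enumerating all bit-assignments on that window and averaging. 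Each such enumeration costs roughly $2^{O(\log n)} = \poly(n)$ time, and the exponent in that polynomial is what ultimately produces the $n^{73}$ and $\eps^{-72}$ in the statement. Third, I would run this deterministic marginal estimator for each of the $m = O(n)$ edges (on instances with at most $n$ vertices), multiply, and track the accumulated error to certify an overall $\eps$-approximation, yielding the claimed $O((n^{73}/\eps^{72})\log(n/\eps))$ bound after setting the per-step accuracy to $\Theta(\eps/m)$.

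Two points need care. The first is that the intermediate instances produced by the counting-to-sampling reduction are \emph{not} themselves min-degree-$\ge 3$ graphs — orienting and deleting edges can lower degrees — so the submartingale/truncation argument has to be carried out in the generalised ``sink-free on $S$'' setting, verifying that the drift estimate still holds there (vertices outside $S$ act as automatic non-sink absorbing endpoints, which only helps). The second, and the step I expect to be the main obstacle, is the quantitative bookkeeping of the derandomisation: one must show that the local window on which the truncated sampler's decision depends has size $O(\log n)$ with the right constant, that the enumeration can be organised so its cost is genuinely $\poly(n)$ rather than quasi-polynomial, and that the polynomial exponents compose through the $m$ telescoping factors to give exactly the exponents claimed. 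Everything else — the telescoping identity, the constant lower bound on each $p_i$, the union bound over error, and the near-linear-time and randomised-$O((n/\eps)^2\log(n/\eps))$ variants — is routine once the local sampler and its truncation error bound are in hand.
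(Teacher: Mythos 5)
Your high-level plan --- truncate the local sampler after $O(\log(n/\eps))$ coin flips, derandomise by enumerating the random choices, and feed the resulting deterministic marginal estimates into a telescoping product protected by a constant marginal lower bound --- is exactly the paper's strategy. But your choice of decomposition introduces two genuine gaps. The paper does \emph{not} pin edges one by one; it adds the sink-free constraints vertex by vertex, writing $\abs{\Omega_V}=2^{\abs{E}}\prod_{i=1}^{n}\mu_{V_{i-1}}(v_i\text{ is not a sink})$ with $V_i=\{v_1,\dots,v_i\}$. Every intermediate instance then lives on the \emph{unmodified} graph $G$, so minimum degree $\ge 3$ is preserved throughout and the truncation lemma applies verbatim; and there are only $n$ factors, each needing relative accuracy $\eps/(2n)$, which (together with $\mu_{V_{i-1}}(v_i\text{ not a sink})>1/2$) gives $n\cdot(n/\eps)^{72}\cdot O(\log(n/\eps))$. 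Your edge-pinning decomposition has $m$ factors; since $m$ can be $\Theta(n^2)$, your assertion that $m=O(n)$ is false, the per-factor accuracy becomes $\Theta(\eps/m)$, and the exponent inflates to roughly $m^{73}$, well above the claimed bound. (As written, $Z=2^m\prod_i p_i$ is also not the correct identity for edge pinning; pinning a reference orientation gives $Z=1/\prod_i p_i$.) Relatedly, the enumeration must be over the sequence of at most $72\ln(292n/\eps)$ coin flips the truncated algorithm actually consumes --- giving $(292n/\eps)^{72}$ runs --- not over ``all bit-assignments'' on a radius-$O(\log n)$ neighbourhood, whose edge count can be superpolynomial.

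The second gap is your treatment of the degree drop. You dismiss it because ``vertices outside $S$ act as automatic non-sink absorbing endpoints, which only helps.'' That is not the issue: the problem is vertices that remain \emph{in} $S$ after incident edges have been pinned and deleted --- their degree can fall to $2$, at which point the path-length process loses its drift and the $O(\log(n/\eps))$ truncation fails. The paper confronts exactly this difficulty, but only in the proof of the fast-sampling theorem, where it is controlled by a careful edge-processing order guaranteeing at most one degree-$2$ vertex in $S$ at any time, adjacent to the queried edge. Switching to the vertex decomposition makes the issue disappear entirely for the deterministic counting theorem; keeping edge pinning would force you to import that ordering argument and would still not recover the $n^{73}$ exponent.
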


Although high, the constant exponent in the running time of \Cref{thm:sfo-deterministic-counting} is actually the most interesting feature of our algorithm. 
In contrast, the running time of most known FPTASes \cite{weitz06counting, BG06, BGKNT07, HSV18, Bar16, patel2017deterministic, Moi19, feng2023towards, CFGZZ24} has an exponent that depends on some parameter (such as the maximum degree) of the input graph.
There are exceptions, for example, \cite{LLY13,GL18}, but the exponents of their running times still depend on the parameters of the problem (not of the instance).

For fast sampling, we need a slight modification of the idea above to sample orientations of edges one by one,
resulting in the following approximate sampler. 

\begin{theorem}[fast sampling]\label{thm:sfo-fast-sampling}
  For graphs with minimum degree at least $3$, there exists a sampling algorithm that, given  $0<\varepsilon<1$, outputs a random orientation $\sigma$ such that $\sigma$ is $\varepsilon$-close to a uniform random sink-free orientation in total variation distance, with running time $O\left(m\log\left(\frac{m}{\varepsilon}\right)\right)$, where $m$ is the number of edges. 
\end{theorem}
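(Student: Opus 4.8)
The plan is to turn the local sampler described in the overview into a global sampler by drawing the orientations of all $m$ edges one at a time, using the local procedure to decide each edge while reusing the random bits already revealed. Concretely, I would fix an arbitrary ordering of the edges $e_1,\dots,e_m$ and process them in turn. To decide the orientation of $e_i=\{x,y\}$, I invoke (a symmetric two-sided version of) the truncated local sampler to estimate the conditional marginal of $e_i$'s orientation given the orientations already committed to $e_1,\dots,e_{i-1}$; the local sampler explores the ``path followed by the sampler'' emanating from $x$ and from $y$, reveals fresh orientations as needed, and pops any sink it uncovers by re-randomising its incident edges, exactly as in the proof of the local-sampler correctness stated earlier. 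Since we only ever need to know whether $e_i$ points one way or the other, one of the two early-termination rules (a sink is found, or a cycle / path-to-a-cycle / path-to-a-vertex-outside $S$ is found) resolves the query, except that we truncate the exploration once the relevant path reaches length $C\log(m/\eps)$.

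The key steps, in order, are: (i) \emph{Correctness modulo truncation.} Argue that if the local queries were run without truncation, the sequence of committed orientations would be distributed exactly as a uniform sink-free orientation --- this is immediate from the chain rule for conditional marginals together with the correctness of the (untruncated) local sampler, which is the special case already established. (ii) \emph{Truncation error.} Using the minimum-degree-$3$ hypothesis, the length of the sampler's path is a submartingale with bounded increments, so by a standard optional-stopping / Azuma argument the probability that any single query's path ever reaches length $C\log(m/\eps)$ is at most $\eps/m$ for a suitable constant $C$; a union bound over the $m$ edges then gives total variation distance at most $\eps$ between the truncated output and the exact distribution. (iii) \emph{Running-time bound.} Each query, truncated at depth $O(\log(m/\eps))$, touches only $O(\log(m/\eps))$ edges \emph{in expectation} (again by the submartingale/random-walk drift, the expected exploration size is $O(\log(m/\eps))$ even though the worst case is larger), and the truncation caps the worst case; summing over $m$ edges yields the claimed $O(m\log(m/\eps))$ bound. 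One must also check that the random bits are shared consistently across queries so that the amortised cost does not blow up --- revealed orientations are stored and never recomputed, and a sink-pop only re-randomises a bounded neighbourhood.

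The main obstacle I anticipate is step (iii), specifically controlling the \emph{total} work rather than per-query work: a naive analysis charges $O(\log(m/\eps))$ to each of $m$ edges independently, but sink-popping events triggered by a later query can in principle invalidate orientations committed for earlier edges, forcing re-examination. I would handle this by arguing that each sink-pop is a local operation whose cost can be charged against the progress made (the path length is a submartingale, so the expected number of pops is controlled), and that the sequential exposure of edges means the ``state'' the sampler maintains is monotone enough that the global running time telescopes. A secondary subtlety is making the local marginal estimator \emph{exact} (not merely approximate) for the orientation of $e_i$ --- since a single edge has only two possible orientations and the early-termination rules are deterministic given the revealed randomness, the estimator returns the exact conditional choice once it halts, so the only error is the truncation error already accounted for in step (ii). Finally, I would remark that because we produce an actual orientation (not a marginal estimate), no further counting-to-sampling reduction is needed, which is why this theorem is about sampling rather than counting.
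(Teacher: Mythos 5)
Your high-level skeleton matches the paper's: sample edge orientations one at a time from their conditional marginals using a truncated, edge-version of the local sampler, bound the per-query truncation error by $\eps/m$ via the submartingale argument, and union-bound over the $m$ edges. However, there is a genuine gap at the heart of your step (ii), and it is precisely the point the paper's proof spends most of its effort on. Once you condition on the orientations of $e_1,\dots,e_{i-1}$, the residual instance is an $S'$-sink-free-orientation problem on the graph with those edges deleted (and with vertices that already have a committed outgoing edge removed from $S'$). This residual graph need \emph{not} have minimum degree $3$: vertices still in $S'$ can drop to degree $2$ or lower. For degree-$2$ vertices the exploration path has no drift --- this is exactly the $\Theta(\ell^2)$ obstruction discussed before \Cref{lemma:efficient-truncation} --- so the Azuma/submartingale bound you invoke fails, and truncating at depth $O(\log(m/\eps))$ no longer incurs only $\eps/m$ error. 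With an \emph{arbitrary} edge ordering, as you propose, this failure is unavoidable.

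The paper's fix is a specific processing order: repeatedly exhaust the edges incident to one vertex $v\in S$ (following the direction of the last sampled edge to pick the next focus vertex), which guarantees the invariant that whenever $\esample$ is invoked there is at most one degree-$2$ vertex in $S$, and if it exists it is an endpoint of the queried edge; the martingale argument is then patched by adjusting the constant to absorb the possible lack of drift in the first step. Your proposal needs this (or an equivalent) invariant to go through. On the other hand, the obstacle you flag as the main one --- later queries invalidating earlier commitments and the need to share random bits consistently --- is not an issue in the paper's implementation: each edge query is a fresh, independent, truncated run of $\esample$ on the residual instance, with worst-case (not just expected) cost $O(\log(m/\eps))$, and committed edges are simply deleted from the graph rather than kept as revealed randomness.
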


Our sampler runs in $\widetilde{O}(m)$ time\footnote{The $\widetilde{O}$ notation hides logarithmic factors.} instead of the $O(n^2)$ time that sink-popping requires, at the cost of generating an approximate sample instead of a perfect sample. 
This improves over sink-popping when $m=o(n^2/\log n)$ and leads to a faster FPRAS using the counting-to-sampling reduction \cite{jerrum1986random}.
In fact, the running time of the FPRAS can be improved further by directly invoking the truncated local sampler in the reduction.

\begin{theorem}[fast approximate counting]\label{thm:sfo-fast-counting}
  For graphs with minimum degree at least $3$, there exists a (randomised) algorithm that, given $0<\eps<1$, outputs a quantity that is an $\varepsilon$-approximation with probability at least $3/4$ to the number of sink-free orientations.
  The running time is $O((n/\eps)^2\log(n/\eps))$, where $n$ is the number of vertices. 
\end{theorem}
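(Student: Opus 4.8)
The plan is to run the classical self-reducibility reduction from approximate counting to sampling \cite{jerrum1986random}, but to feed it the \emph{truncated local sampler} as a marginal estimator rather than a full-state sampler; this is what the excerpt means by ``directly invoking the truncated local sampler in the reduction.'' Concretely, fix an arbitrary vertex ordering $v_1,\dots,v_n$, and for $0\le k\le n$ let $Z_k$ be the number of orientations in which $v_1,\dots,v_k$ are all non-sinks. Then $Z_0=2^m$, $Z_n=Z$ is the number of sink-free orientations, and
\[
  Z \;=\; 2^m\prod_{k=1}^{n}\theta_k,\qquad
  \theta_k \;\defeq\; \Pr{v_k\text{ is not a sink}\mid v_1,\dots,v_{k-1}\text{ are not sinks}} \;=\; \frac{Z_k}{Z_{k-1}}.
\]
Each $\theta_k$ is exactly the marginal probability that the local sampler of this paper estimates, instantiated with $S=\{v_1,\dots,v_{k-1}\}$ and target vertex $v=v_k$; note this is well-defined and positive since minimum degree $\ge 3$ guarantees $Z>0$, hence $Z_k>0$ for all $k$.

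Two consequences of the minimum-degree-$3$ hypothesis drive the analysis. First, there is an absolute constant $c>0$ with $\theta_k\ge c$ for every $k$: a vertex can be a sink only if the exploration path started at it has length $0$, and with minimum degree $\ge 3$ this path length is a submartingale with uniform positive drift (the same property underlying the efficiency and truncatability of the local sampler, cf.\ the discussion preceding \Cref{thm:sfo-deterministic-counting}), so the conditional probability of $v_k$ being a sink is bounded away from $1$. Second, truncating the local sampler at depth $D=O(\log(n/\eps))$ produces, for each $k$, a $\{0,1\}$-valued random variable $\tilde B_k$ computable in time $O(D)=O(\log(n/\eps))$ whose distribution is within total variation distance $\eps_0$ of $\mathrm{Bernoulli}(\theta_k)$; we will take $\eps_0=c\eps/(8n)$, so in particular its mean $\tilde\theta_k$ satisfies $|\tilde\theta_k-\theta_k|\le\eps_0$.

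The estimator is the obvious product estimator: for each $k$, draw $N_k$ independent copies of $\tilde B_k$ using fresh randomness, let $\hat\theta_k$ be their empirical mean, and output $\hat Z\defeq 2^m\prod_{k=1}^n\hat\theta_k$ (represented, say, via $\log\hat Z$, so that the arithmetic is of lower order). By independence across $k$, $\E{\hat Z}=2^m\prod_k\tilde\theta_k$, and since $\theta_k\ge c$ and $\eps_0\le c\eps/(8n)$ we get $\tilde\theta_k\in[\,(1-\tfrac{\eps}{8n})\theta_k,\,(1+\tfrac{\eps}{8n})\theta_k\,]$, whence $\E{\hat Z}\in[1-\tfrac\eps4,1+\tfrac\eps4]\cdot Z$. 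For the variance, $\Var{\hat Z}/\E{\hat Z}^2=\prod_k\big(1+\tfrac{1-\tilde\theta_k}{N_k\tilde\theta_k}\big)-1\le\exp\big(\sum_k\tfrac{1}{N_k c'}\big)-1$ for a constant $c'$ slightly smaller than $c$; because every $\theta_k$ is bounded below by a constant, the \emph{equal allocation} $N_k=\Theta(n/\eps^2)$ already makes $\sum_k\tfrac1{N_k c'}=\Theta(n^2/(c'N))\le\eps^2/37$, so the relative variance is at most $\eps^2/36$. Chebyshev's inequality combined with the bias bound then yields $\Pr{\,|\hat Z-Z|>\eps Z\,}\le 1/4$. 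The total work is $\sum_k N_k=\Theta(n^2/\eps^2)$ invocations of the truncated local sampler, each costing $O(\log(n/\eps))$ time, giving running time $O((n/\eps)^2\log(n/\eps))$ as claimed; if a higher success probability were wanted one could take the median of $O(\log(1/\delta))$ independent runs.

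Almost every step above is routine product-estimator bookkeeping. The one genuinely load-bearing ingredient is the uniform lower bound $\theta_k\ge c$ (equivalently, that the conditional sink probability is bounded away from $1$), together with the $O(\log(n/\eps))$ truncation depth; both are precisely where minimum degree $\ge 3$ is essential, since without it a marginal $\theta_k$ can be small and the exploration path can take $\Theta(\ell^2)$ time to resolve. I expect this step—inherited from the analysis of the local sampler itself—to be the main obstacle, whereas the reduction, the error propagation through the product, and the variance/sample-count calculation are standard. Note also that this argument bypasses the sampler of \Cref{thm:sfo-fast-sampling} entirely: using that sampler inside the reduction would cost $\widetilde O(m)$ per sample, whereas the local sampler costs only $\widetilde O(1)$ per marginal query, which is the source of the improvement.
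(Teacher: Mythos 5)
Your overall route is the same as the paper's: decompose $Z$ via the telescoping product \eqref{eq:decomposition}, estimate each conditional marginal with the truncated local sampler ($O(\log(n/\eps))$ work per query), form the product estimator, control bias via the truncation error and variance via Chebyshev with $\Theta(n/\eps^2)$ samples per marginal. All of that bookkeeping matches the paper's proof of \Cref{thm:sfo-fast-counting} essentially verbatim (the paper packages the $1/\eps^2$ factor as $N=O(\eps^{-2})$ outer repetitions of a product of $n$-sample averages, which is equivalent to your allocation $N_k=\Theta(n/\eps^2)$).

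The one genuine gap is your justification of the load-bearing ingredient, the uniform lower bound $\theta_k\ge c$. You derive it from the claim that the exploration-path length is a submartingale with positive drift, ``so the conditional probability of $v_k$ being a sink is bounded away from $1$.'' That inference does not follow from the truncation analysis. The Azuma-type bound used for truncation gives $\Pr{X_T=0}\le e^{-T/72}$, which is vacuous for small $T$, and $\sum_{T\ge 1}e^{-T/72}\approx 72$, so summing over all times says nothing about the total probability of the path ever returning to length $0$; more generally, a bounded-increment submartingale started at $1$ can hit $0$ with probability arbitrarily close to $1$ unless one quantifies the drift at \emph{every} step, including the first $d(v)$ coin flips where the walk can die immediately. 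The paper does not obtain this bound from the sampler's analysis at all: it proves it separately as \Cref{lemma:marginal-lower-bound}, namely $\mu_S(v\text{ is not a sink})>\tfrac12$ for all $S$ and $v\notin S$, via an induction in the style of Shearer's lemma on the quantities $q_J$ (\Cref{sec:appendix-lower-bound}), which simultaneously establishes $\Omega_S\neq\emptyset$. Your proof needs this lemma (both for the bias step $|\tilde\theta_k-\theta_k|\le\eps_0\Rightarrow\tilde\theta_k/\theta_k=1\pm O(\eps/n)$ and for the variance bound $\Var{\tilde B_k}/\tilde\theta_k^2=O(1)$), so you should cite or prove it rather than attribute it to the drift argument; with that lemma in hand, the rest of your argument is correct.
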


The success probability $3/4$ in \Cref{thm:sfo-fast-counting} is standard in the definition of FPRAS,
and can be easily amplified by taking the median of repeated trials and applying the Chernoff bound.

Note that directly combining \Cref{thm:sfo-fast-sampling} with the counting-to-sampling reduction results in an $\widetilde{O}(n m/\eps^2)$ running time.
\Cref{thm:sfo-fast-counting} is faster when $m=\omega(n)$.
Previously, the best running time for approximate counting is $\widetilde{O}(n^3/\eps^2)$,
via combining the $O(n^2)$ time sink-popping algorithm \cite{guo2020tight} with simulated annealing (see, for example, \cite[Lemma 12]{guo2020tight}).
\Cref{thm:sfo-fast-counting} improves over this by roughly a factor of $n$. 
In very dense graphs (when $m=\Omega(n^2)$),
\Cref{thm:sfo-fast-counting} achieves near-linear time, which appears to be rare for approximate counting. 

There are a plethora of fast sampling and deterministic approximate counting techniques by now.
However, it appears difficult to achieve our results without the new local sampler.
For example, the coupling of Bubley and Dyer \cite{BD97b} does not seem to improve with the minimum degree requirement.
On a similar note, the recent deterministic counting technique of \cite{CFGZZ24} requires a distance-decreasing Markov chain coupling, whereas the Bubley-Dyer coupling is distance non-increasing.
In any case, even if the technique of \cite{CFGZZ24} applied, it would not imply a running time with a constant exponent.
Other fast sampling and FPTAS techniques, such as spectral independence \cite{anari2020spectral,chen2021optimal,ChenG23}, correlation decay \cite{weitz06counting,LinLL14},
and zero-freeness of polynomials \cite{Bar16,patel2017deterministic,GuoLLZ21},
all seem difficult to apply. 
The main obstacle is that these techniques typically make use of properties that hold under arbitrary conditionings.
However, for SFO, even if we start with a graph of minimum degree $3$, 
conditioning the edges can result in a graph that is effectively a cycle, 
in which case no nice property holds.
Our techniques, in contrast, require no hereditary properties and thus can benefit from the minimum degree requirement.

One much less obvious alternative approach to FPTAS is via the connection of the local lemma.
In particular, because SFOs form extremal instances,
their number can be computed via the independence polynomial evaluated at negative weights on the dependency graph. (We also see this fact in \Cref{sec:appendix-lower-bound}.)
Normally this approach would not be efficient, because the dependency graph is usually exponentially large (for example for all-terminal reliability),
but in the case of SFOs, the dependency graph is just the input graph itself.
There are more than one FPTASes \cite{patel2017deterministic,HSV18} for the independence polynomial at negative weights.
However, neither appears able to recover \Cref{thm:sfo-deterministic-counting}.
With the minimum degree $\ge 3$ assumption, the probability vector for SFOs is within the so-called Shearer's region, where both algorithms apply.\footnote{In \cite{patel2017deterministic}, only a uniform bound is stated, but one can introduce a scaling variable $t$ and make a new polynomial in $t$, so that their algorithm works in the Shearer's region.}
The downside is that the running time of both algorithms has the form $(n/\eps)^{O(\log d)}$,\footnote{To be more precise, the hidden constants in the exponents decrease in the multiplicative ``slack'' of how close the evaluated point is to the boundary of Shearer's region. For SFOs, when constant degree vertices are present, the slack is a constant, and so are the hidden constants in the exponents.} where $d$ is the maximum degree of the graph.
Thus, in the setting of \Cref{thm:sfo-deterministic-counting}, these algorithms run in quasi-polynomial time instead.
A more detailed discussion is given in \Cref{sec:ind-poly}.

The rest of the paper is organised as follows.
In \Cref{sec:local-sampler}, we introduce our local sampler.
It is then analysed in \Cref{sec:analysis}.
The main theorems are shown in \Cref{sec:applications}.
We conclude with a few open problems in \Cref{sec:conclusion}.

\section{A local sampler for sink-free orientations}\label{sec:local-sampler}

Fix $G=(V,E)$ as an undirected graph. An orientation $\sigma$ of $G$ is an assignment of a direction to each edge, turning the initial graph into a directed graph. For any $S\subseteq V$, let $\Omega_S$ be the set of $S$-sink-free orientations of $G$, i.e., the set of orientations such that each vertex $v\in S$ is not a sink. Thus, $\Omega_V$ is the set of all (normal) sink-free orientations of $G$. When $\abs{\Omega_S}\neq 0$, we use $\mu_S$ to denote the uniform distribution over $\Omega_S$.
For two adjacent vertices $u,v\in V$, we use $\{u,v\}$ to denote the undirected edge and $(u,v)$ to denote the directed edge, from $u$ to $v$.

We apply the following standard counting-to-sampling reduction \cite{jerrum1986random}. 
Let $V=\{v_1,v_2,\dots,v_n\}$ be arbitrarily ordered and, for each $0\leq i\leq n$, define $V_i=\{v_1,v_2,\dots,v_i\}$. Then, $\abs{\Omega_V}$ can be decomposed into a telescopic product of marginal probabilities:
\begin{equation}\label{eq:decomposition}
\abs{\Omega_V}=\abs{\Omega_{V_0}}\cdot \prod\limits_{i=1}^{n}\frac{\abs{\Omega_{V_i}}}{\abs{\Omega_{V_{i-1}}}}=2^{|E|}\cdot \prod\limits_{i=1}^{n}\mu_{{V_{i-1}}}(v_i\text{ is not a sink}).
\end{equation}
Thus, our goal becomes to estimate $\mu_S(v\text{ is not a sink})$ for any $S\subseteq V$ and $v\not\in S$.

We view $S$-sink-free orientations under the variable framework of the Lov\'{a}sz local lemma.
Here, each edge corresponds to a variable that indicates its direction,
and each vertex in $S$ represents a bad event of being a sink.
An instance is called \emph{extremal} if any two bad events are independent (namely, they share no common variable) or disjoint. 
It is easy to see that all instances to the $S$-sink-free orientation problem are extremal: 
if a vertex is a sink then none of its neighbors can be a sink.
For extremal instances like this,
the celebrated Moser-Tardos algorithm~\cite{moser2010constructive} is guaranteed to output an assignment avoiding all bad events uniformly at random~\cite{GJL19}.
This is summarised in \Cref{Alg:MT}.
Note that when $S=V$, \Cref{Alg:MT} is the sink-popping algorithm by Cohn, Pemantle, and Propp \cite{CPP02}.

%

\begin{algorithm} 
\caption{PRS algorithm for generating an $S$-sink-free orientation} \label{Alg:MT}
\SetKwInOut{Instance}{Instance}
\SetKwInOut{Input}{Input}
\SetKwInOut{Output}{Output}
\SetKwInOut{Data}{Data}
 \Input{
an undirected graph $G=(V,E)$ and a subset of vertices $S\subseteq V$}
\Output{an orientation $\sigma$ of $G$}
orient each edge $e\in E$ uniformly at random and independently to obtain an orientation $\sigma$\;
\While{ $\exists v\in S$ s.t. $v$ is a sink in $\sigma$ }{
    choose such a $v$ arbitrarily\label{Line:choose}\;
    resample the orientation of all edges incident to $v$ in $\sigma$ uniformly at random\;
}
\Return $\sigma$;
\end{algorithm}

The following lemma is a direct corollary from~\cite[Theorem 8]{GJL19} and SFOs being extremal.

\begin{lemma}\label{lemma:MT-SFO-correctness}
If $|\Omega_{S}|\neq 0$, \Cref{Alg:MT} terminates almost surely and returns an orientation distributed exactly as $\mu_S$. 
\end{lemma}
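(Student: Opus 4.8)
The plan is to derive \Cref{lemma:MT-SFO-correctness} as a direct instantiation of the general partial rejection sampling guarantee \cite[Theorem 8]{GJL19}, so the work is almost entirely bookkeeping: checking that the $S$-sink-free orientation problem fits the hypotheses of that theorem. First I would set up the variable framework explicitly: the variables are the $|E|$ edge orientations, each uniform over its two directions and mutually independent; for each $v \in S$ the bad event $A_v$ is ``$v$ is a sink'', which depends precisely on the variables corresponding to edges incident to $v$. I would then observe that \Cref{Alg:MT} is exactly the Moser--Tardos/PRS resampling procedure for this event system, with the only freedom being the arbitrary choice of which violated bad event to resample in \Cref{Line:choose}.

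Next I would verify the \emph{extremality} condition, which is the one substantive point (and already sketched in the surrounding text). Two bad events $A_u$ and $A_v$ are either variable-disjoint (when $u$ and $v$ share no edge) or, when $\{u,v\} \in E$, they are \emph{logically disjoint}: if $u$ is a sink then the edge $\{u,v\}$ is oriented $(v,u)$, so $v$ has an outgoing edge and is not a sink, hence $A_u \cap A_v = \emptyset$. This is exactly the definition of an extremal instance, so \cite[Theorem 8]{GJL19} applies verbatim. From that theorem I would read off the two conclusions I need: (i) conditioned on $|\Omega_S| \neq 0$ (equivalently, the event system being satisfiable), the algorithm terminates with probability $1$; and (ii) the output distribution is the uniform measure conditioned on avoiding all bad events, i.e.\ the uniform distribution on orientations in which no $v \in S$ is a sink, which is precisely $\mu_S$ by definition. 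I would also note that the arbitrary selection rule in \Cref{Line:choose} does not affect correctness, since the PRS guarantee holds for any resampling order (this is part of the content of the Moser--Tardos/PRS analysis and is inherited here).

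The main obstacle, such as it is, is purely expository: making sure the ``resample all edges incident to $v$'' step in \Cref{Alg:MT} matches the ``resample all variables of the violated event'' step in the PRS/Moser--Tardos formulation, i.e.\ that the variable set of $A_v$ is exactly the set of edges incident to $v$ and nothing more. Once that correspondence is pinned down, there is genuinely nothing to prove beyond citing \cite[Theorem 8]{GJL19}; the lemma is stated as a ``direct corollary'' for exactly this reason. I would close by remarking that when $S = V$ this recovers the correctness of the sink-popping algorithm of \cite{CPP02}, as already noted before the statement.
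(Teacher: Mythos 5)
Your proposal is correct and matches the paper exactly: the paper states \Cref{lemma:MT-SFO-correctness} as a direct corollary of \cite[Theorem 8]{GJL19} together with the observation (made just before the algorithm) that $S$-sink-free orientation instances are extremal, which is precisely the verification you carry out. The only substantive check---that adjacent bad events are logically disjoint because a sink at $u$ forces the shared edge to point into $u$---is the same one the paper relies on.
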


We remark that the only possible case for $\Omega_S=\emptyset$ is when $S$ forms a tree and not connected to any vertex not in $S$.

\Cref{Alg:MT} requires one to generate a global sample when estimating $\mu_S(v\text{ is not a sink})$ for some $v\notin S$, which is wasteful. 
The following observation is crucial to turning it into a local sampler.


\begin{lemma}[criteria for early termination]\label{observation:cycle}
Suppose $|\Omega_S|\neq 0$. For any $v\notin S$, $v$ is a sink upon the termination of \Cref{Alg:MT} if and only if
\begin{enumerate}
    \item[(a)] $v$ becomes a sink at some iteration.
\end{enumerate}
Conversely, $v$ is not a sink upon the termination of \Cref{Alg:MT} if and only if one of the following holds:
\begin{enumerate}
    \item[(b1)] a directed cycle $C$ containing $v$, or a directed path $P$ containing $v$ which ends in a directed cycle $C$ is formed in some iteration, or 
    \item[(b2)] a nonempty directed path $P$ from $v$ to some $u\notin S$ is formed in some iteration. 
\end{enumerate} 
\end{lemma}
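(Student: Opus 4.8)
The plan is to analyze the structure of the directed graph formed by $\sigma$ at any point during the execution of \Cref{Alg:MT}, and to identify which local configurations around $v$ are ``frozen,'' i.e., guaranteed never to be touched by a future resampling step. The key monotonicity principle is: \Cref{Alg:MT} only resamples edges incident to a sink in $S$, so if an edge $e$ is oriented in such a way that neither of its endpoints is, and never will become, a sink in $S$, then $e$'s orientation is fixed for the rest of the execution. I would make this precise by a downward induction along directed paths: if $(x,y)$ is a directed edge and $y$ is permanently a non-sink (or $y \notin S$), then $x$ has a permanent outgoing edge, so $x$ is a permanent non-sink; this lets ``non-sink-ness'' propagate backwards along any directed path.

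For the first equivalence, the ``if'' direction of (a) is the heart of the matter: if $v \notin S$ becomes a sink at some iteration, I claim $v$ remains a sink forever. Indeed, all edges incident to $v$ point into $v$; since $v \notin S$, the algorithm never selects $v$ in \Cref{Line:choose}, so the only way an incident edge $\{v,w\}$ could be resampled is if $w$ is selected as a sink in $S$. But if $w$ is a sink, the edge $\{v,w\}$ points from $v$ to $w$, contradicting that it points into $v$. Hence no incident edge is ever resampled and $v$ stays a sink; upon termination it is still a sink. The ``only if'' direction is trivial: if $v$ is a sink at termination, that is itself an iteration (the last one) at which $v$ is a sink.

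For the converse equivalence, I would argue both directions. For ``if'': suppose (b1) holds, say a directed path $P = (v = x_0, x_1, \dots, x_k)$ with $x_k$ lying on a directed cycle $C$ is present at some iteration. Every vertex on $C$ has an outgoing edge within $C$; I claim these edges are frozen. If some edge of $C$ were resampled, consider the \emph{first} time any edge of $C \cup P$ is resampled: the vertex selected must be a sink in $S$, but every vertex of $C \cup P$ currently has an outgoing edge along $C \cup P$, and none of those outgoing edges has yet been resampled, so no such vertex is a sink — contradiction. Thus all of $C \cup P$ is frozen, $v$ has the outgoing edge $(x_0,x_1)$ forever, and $v$ is not a sink at termination. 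The case (b2), a nonempty directed path from $v$ to $u \notin S$, is the same argument with the endpoint handled by noting $u \notin S$ is never selected, so the path edges are never the ``first resampled'' edge. For ``only if'': suppose $v$ is not a sink at termination. Then at termination there is a directed path starting at $v$; follow outgoing edges greedily. Since the graph is finite, this walk either revisits a vertex (closing a directed cycle, giving (b1)) or reaches a vertex with no outgoing edge. Such a vertex is a sink at termination, hence not in $S$ by \Cref{lemma:MT-SFO-correctness} applied to the terminal configuration (all of $S$ is sink-free), so it is some $u \notin S$, and the path from $v$ to $u$ is nonempty because $v$ is not itself a sink — this is (b2).

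The main obstacle, and the step requiring the most care, is the ``first resampled edge'' argument used throughout the ``if'' directions: one must verify that at the hypothetical first moment an edge of the frozen structure $C \cup P$ is resampled, the selected sink vertex genuinely still has all its $C \cup P$-outgoing edges intact, so that it cannot in fact be a sink. This needs the observation that resampling a vertex $w$ only changes edges incident to $w$, so ``being the first edge of $C \cup P$ resampled'' forces $w \in C \cup P$ and all of $w$'s structural outgoing edges to be untouched at that instant. A secondary subtlety is that (a) and (b1)/(b2) must be shown to be exhaustive and mutually exclusive for $v \notin S$, which follows since at termination $v$ is either a sink (case (a), by the first equivalence) or not (cases (b1)/(b2), by the second), and a vertex cannot be both.
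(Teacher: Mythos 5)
Your proposal is correct and follows essentially the same route as the paper: the same freezing argument for case \emph{(a)} (a neighbour $w$ cannot become a sink while its edge points towards $v$), the same ``no edge of $C\cup P$ can be the first one resampled'' induction for \emph{(b1)}/\emph{(b2)}, and the same walk-following argument (repeat a vertex or escape $S$) for the reverse direction. Your explicit ``first resampled edge'' formalisation is just a minor repackaging of the paper's inductive phrasing.
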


\begin{proof}
    First, consider \emph{(a)}. If $v$ becomes a sink at any point, then for every $w \in S$ which is a neighbour of $v$, the edge $(w,v)$ is oriented towards $v$. Since $v \notin S$, the edge $(w,v)$ will not be resampled via $v$, and could only be resampled by $w$ becoming a sink. Since $w$ cannot become a sink without resampling $(w,v)$, $v$ will remain a sink. The other implication is obvious.

    Now for \emph{(b1)} and \emph{(b2)}, we first consider the forward implications. For a cycle $C$, every vertex $u \in C$ has an edge pointing outwards towards some $w \in C$ which also has an edge pointing outwards. None of these edges can be resampled without another edge $e \in C$ being resampled first, so no vertex in the cycle will ever become a sink again.

    Consider a path $P$ that ends outside of $S$ or in a cycle. Inductively, we see that no edge on this path will be resampled without the edge after it being resampled. The edge connected to the cycle or $S^c$ will not be resampled, since that vertex may never be a sink, so no vertex in $P$ can become a sink.

    For the reverse implication, suppose $v$ is eventually not a sink. In that case, there must be some edge $(v,w)$ pointing towards a neighbour $w$. If $w$ is a sink, then $w \notin S$ and we are in case \emph{(b2)}. 
    Otherwise, $w$ is not a sink, and there is an adjacent edge pointing away from $w$, and we repeat this process.
    As the set of vertices is finite, if vertices considered in this process are all in $S$, then there must be repeated vertices eventually,
    in which case we are in \emph{(b1)}.
\end{proof}

\begin{figure}
\begin{tikzpicture}
\begin{scope}[shift={(5.3,0)}]
\node[circle,draw=black,fill=white,pattern=crosshatch] (A0) at (-4.4,-0.4) {};
\node[circle,draw=black,fill=white,pattern=crosshatch] (A1) at (-3.6,1) {};
\node[circle,draw=black,fill=white,pattern=crosshatch] (A2) at (-1.9,0.6) {};
\node[circle,draw=black,fill=white,pattern=crosshatch] (A3) at (-2.2,-0.8) {};
\node[circle,draw=black,fill=white,pattern=crosshatch] (A4) at (-3.6,-1.3) {};
\node[circle,draw=black,fill=white,pattern=crosshatch] (A5) at (-0.8,0.3) {};
\node[circle,draw=black,fill=white] (A6) at (-0.4,-0.9) {};
\node[] () at ($(A6)+(0.1,-0.4)$) {\Large $v$};

\draw[->,red,ultra thick] (A0) to [bend left] (A1);
\draw[->,red,ultra thick] (A1) to [bend left] (A2);
\draw[->,red,ultra thick] (A2) to [bend left] (A3);
\draw[->,red,ultra thick] (A3) to [bend left] (A4);
\draw[->,red,ultra thick] (A4) to [bend left] (A0);
\draw[->,red,ultra thick] (A5) to [bend right] (A2);
\draw[->,red,ultra thick] (A6) to [bend right] (A5);

\draw[->,thick] (A0) to [bend left] ($(A0)+(-0.5,0.2)$);
\draw[<-,thick] (A0) to [bend right] ($(A0)+(0.4,0.4)$);
\draw[->,thick] (A1) to [bend left] ($(A1)+(0.2,0.5)$);
\draw[<-,thick] (A2) to [bend left] ($(A2)+(-0.5,-0.1)$);
\draw[->,thick] (A3) to [bend left] ($(A3)+(0.3,-0.6)$);
\draw[->,thick] (A4) to [bend right] ($(A4)+(-0.4,-0.5)$);
\draw[<-,thick] (A5) to [bend right] ($(A5)+(0.4,0.4)$);
\draw[<-,thick] (A6) to [bend left] ($(A6)+(0.5,0.1)$);
\draw[<-,thick] (A6) to [bend left] ($(A6)+(-0.5,-0.1)$);
\node[] () at (-2.6,-2.3) {(b1)};
\end{scope}

\begin{scope}[shift={(5.6,0)}]
\node[circle,draw=black,fill=white] (B0) at (1.2,0.9) {};
\node[circle,draw=black,fill=white,pattern=crosshatch] (B1) at (2.2,0.3) {};
\node[circle,draw=black,fill=white,pattern=crosshatch] (B2) at (3.2,-0.3) {};
\node[circle,draw=black,fill=white] (B3) at (4.2,-0.9) {};
\node[] () at ($(B3)+(0.1,-0.4)$) {\Large $u$};
\node[] () at ($(B0)+(-0.1,0.4)$) {\Large $v$};

\draw[->,red,ultra thick] (B0) to [bend left] (B1);
\draw[->,red,ultra thick] (B1) to [bend right] (B2);
\draw[->,red,ultra thick] (B2) to [bend left] (B3);
\draw[<-,thick] (B3) to [] ($(B3)+(-0.6,-0.3)$);
\draw[<-,thick] (B3) to [] ($(B3)+(0.6,0.3)$);
\draw[->,thick] (B2) to [] ($(B2)+(0.3,0.6)$);
\draw[<-,thick] (B1) to [] ($(B1)+(-0.6,-0.3)$);
\draw[<-,thick] (B0) to [] ($(B0)+(0.3,0.6)$);
\draw[<-,thick] (B0) to [] ($(B0)+(-0.6,-0.3)$);
\node[] () at (2.6,-2.3) {(b2)};
\end{scope}

\begin{scope}[shift={(-10.2,0)}]
\node[circle,draw=black,fill=white] (C0) at (7.5,0) {};
\node[circle,draw=black,fill=white,pattern=crosshatch] (C1) at (7.2,1.0) {};
\node[circle,draw=black,fill=white,pattern=crosshatch] (C2) at (7,-0.9) {};
\node[circle,draw=black,fill=white] (C3) at (8.6,0.2) {};
\node[] () at ($(C0)+(0.1,-0.4)$) {\Large $v$};
\draw[->,red,ultra thick] (C1) to [] (C0);
\draw[->,red,ultra thick] (C2) to [] (C0);
\draw[->,red,ultra thick] (C3) to [] (C0);
\draw[<-,thick] (C1) to [] ($(C1)+(-0.6,0.2)$);
\draw[<-,thick] (C1) to [] ($(C1)+(0.4,0.4)$);
\draw[<-,thick] (C2) to [] ($(C2)+(-0.6,-0.3)$);
\draw[->,thick] (C2) to [] ($(C2)+(0.2,-0.7)$);
\draw[->,thick] (C3) to [] ($(C3)+(0.6,-0.3)$);
\draw[<-,thick] (C3) to [] ($(C3)+(0.5,0.4)$);
\node[] () at (7.8,-2.3) {(a)};
\end{scope}

\end{tikzpicture}
\caption{Illustration of \Cref{observation:cycle}.
Shaded vertices are in the set $S$. 
Once these patterns are formed, thick red edges would never be resampled in \Cref{Alg:MT}.
}
\label{fig:early-termination}
\end{figure}
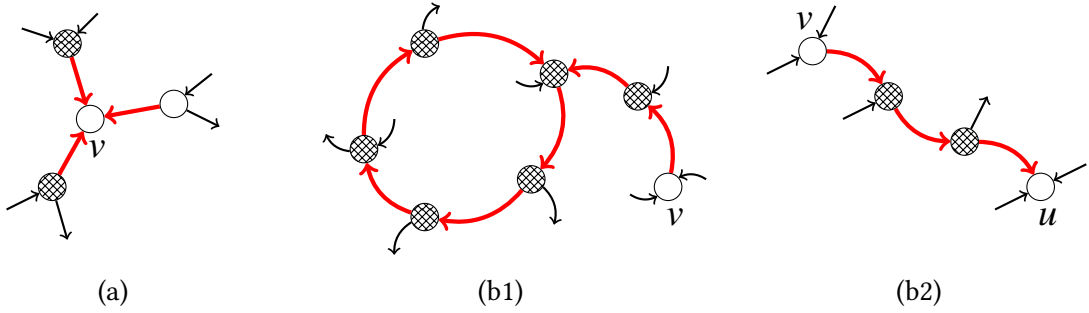


An illustration of \Cref{observation:cycle} is given in \Cref{fig:early-termination}.
Based on \Cref{observation:cycle}, we design a local sampling algorithm for determining whether some vertex $v\notin S$ is a sink or not, given in \Cref{Alg:estimate}. We assume that the undirected graph $G=(V,E)$ is stored as an adjacency list where the neighbors of each vertex are arbitrarily ordered. \Cref{Alg:estimate} takes as input some $S\subseteq V$ and $v\notin S$, returns an indicator variable $x\in \{0,1\}$ such that $\Pr{x=1}=\mu_S(v\text{ is not a sink})$. 
We treat the path $P$ as a subgraph, and $V(P)$ denotes the vertex set of $P$.
When we remove the last vertex from $P$, we remove it and the adjacent edge as well.
Informally, \Cref{Alg:estimate} starts from the vertex $v$,
and reveal adjacent edges one by one.
If there is any edge pointing outward, say $(v,u)$, we move to $u$ and reveal edges adjacent to $u$.
If a sink $w\in S$ is formed, then we mark all adjacent edges of $w$ as unvisited and backtrack.
This induces a directed path starting from $v$.
We repeat this process until any of the early termination criteria of \Cref{observation:cycle} is satisfied,
in which case we halt and output accordingly.

\begin{algorithm} 
\caption{$\sample(G,S,v)$} \label{Alg:estimate}
\SetKwInOut{Instance}{Instance}
\SetKwInOut{Input}{Input}
\SetKwInOut{Output}{Output}
\SetKwInOut{Data}{Data}
\SetKwIF{WP}{ElseIf}{Else}{with probability}{do}{else if}{else}{endif}
 \Input{
an undirected graph $G=(V,E)$, a subset of vertices $S\subseteq V$, and a vertex $v\notin S$;}

\Output{a random value $x\in \{0,1\}$;} 
Let $P$ be a (directed) path initialised as $P=(v)$\;
Initialise a mapping $M:E\to \{\textsf{visited},\textsf{unvisited}\}$ so that $\forall e\in E$, $M(e)=\textsf{unvisited}$\;
\While{$|V(P)|\geq 1$ \label{Line:terminate-sink}}{
    Let $u$ be the last vertex of $P$\label{Line:estimate-choose}\;
    \lIf{$|V(P)|\geq 2$ and $u\notin S$\label{Line:terminate-path}}{\Return $1$}
    \uIf{all edges incident to $u$ are marked \textnormal{\textsf{visited}}\label{Line:sink}}{
        mark all edges incident to $u$ as \textsf{unvisited}\;
        remove $u$ from $P$\label{Line:removal}\; 
    }
    \Else{
        let $e=\{u,w\}$ be the first \textsf{unvisited} edge incident to $u$\;
        mark $e$ as \textsf{visited}\label{Line:mark}\;
        \WP{$1/2$ \label{Line:sample}}{
            \lIf{$w\in V(P)$, or there is a \textnormal{\textsf{visited}} edge $(w,w')$ for some $w'\in V(P)$}{\Return $1$}\label{Line:terminate-cycle-2}
            append $w$ to the end of $P$\label{Line:append}\;
        } 
    }
}
\Return $0$\;
\end{algorithm}

\section{Analysis of the local sampler}
\label{sec:analysis}

In this section, we analyse the correctness and efficiency of \Cref{Alg:estimate}.

\begin{lemma}[correctness of \Cref{Alg:estimate}]\label{lem:alg2-correctness}
Let $G=(V,E)$ be a graph.
For any $S\subseteq V$ such that $|\Omega_S|\neq 0$ and any $v\notin S$, $\sample(G,S,v)$ terminates almost surely and upon termination, returns an $x\in \{0,1\}$ such that 
\[
\Pr{x=1}=\mu_S(v \text{ is not a sink}).
\]
\end{lemma}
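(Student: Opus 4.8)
I would prove \Cref{lem:alg2-correctness} by establishing a tight correspondence between the execution of $\sample(G,S,v)$ and a suitably ordered run of the PRS algorithm (\Cref{Alg:MT}), and then invoke \Cref{lemma:MT-SFO-correctness} and \Cref{observation:cycle}. The idea is that the random bits used at \Cref{Line:sample} (the ``with probability $1/2$'' coin) should be coupled with the random orientations drawn by \Cref{Alg:MT}: revealing whether an unvisited edge $e=\{u,w\}$ at vertex $u$ points out of $u$ or into $u$ corresponds exactly to one fair coin. Under this coupling I would argue that the directed path $P$ maintained by \Cref{Alg:estimate} is always a sub-structure of the orientation that \Cref{Alg:MT} would produce, with the ``visited/unvisited'' marking recording which edges have had their (final) orientation determined.

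**Key steps.** First, termination: I would show $\sample(G,S,v)$ halts almost surely. The cleanest route is to couple with \Cref{Alg:MT} run with a specific popping order (always pop the sink at the tip of $P$ first), observe that \Cref{Alg:MT} terminates almost surely by \Cref{lemma:MT-SFO-correctness}, and note that \Cref{Alg:estimate} reveals strictly less information, so it can only terminate earlier — more precisely, I would track a potential function (e.g.\ the number of edges that have been permanently fixed along $P$ in the coupled MT run) and show \Cref{Alg:estimate} returns as soon as one of the early-termination patterns of \Cref{observation:cycle} appears in the coupled run. Second, correctness of the output value: I would verify that the algorithm returns $1$ in exactly the situations where \Cref{observation:cycle}(b1) or (b2) is witnessed — \Cref{Line:terminate-path} captures a nonempty path reaching $u\notin S$, i.e.\ case (b2); \Cref{Line:terminate-cycle-2} captures $w\in V(P)$ (a cycle through $v$) or a visited edge $(w,w')$ with $w'\in V(P)$ (a path from $v$ hitting an already-formed cycle), i.e.\ case (b1) — and returns $0$ exactly when $V(P)$ shrinks back to $\emptyset$, which under the coupling means $v$ has become a sink whose incident edges are all frozen, i.e.\ case (a). Combining with \Cref{observation:cycle}, the returned value equals $\one{v\text{ is not a sink}}$ in the coupled MT run, and since that run produces a sample from $\mu_S$ by \Cref{lemma:MT-SFO-correctness}, we get $\Pr{x=1}=\mu_S(v\text{ is not a sink})$.

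**The main obstacle.** The delicate point is making the coupling precise and checking it is well-defined: I must specify the popping order used by \Cref{Alg:MT} consistently with the exploration order of \Cref{Alg:estimate} (always pop the current tip of $P$), and verify that whenever \Cref{Alg:estimate} backtracks after discovering a sink $w\in S$ and ``unvisits'' its incident edges, this matches a resample step of \Cref{Alg:MT} that rerandomises exactly those edges — and that the edges \emph{not} on $P$ and not incident to such a $w$ never need to be touched. In particular I need to argue that the orientation of an edge, once committed along $P$ and never un-visited, is never resampled by \Cref{Alg:MT} under this order; this is essentially the content of the forward implications in \Cref{observation:cycle}, so the bookkeeping reduces to a careful induction on the number of steps. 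A secondary subtlety is the boundary case $\Omega_S\neq\emptyset$ but $v$ has few neighbours (or the exploration immediately finds $v$ a sink), which the $|V(P)|\geq 1$ versus $|V(P)|\geq 2$ guards on \Cref{Line:terminate-sink} and \Cref{Line:terminate-path} handle; I would check these corner cases explicitly rather than fold them into the induction.
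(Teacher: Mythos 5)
Your proposal is correct and takes essentially the same route as the paper: both couple $\sample(G,S,v)$ with \Cref{Alg:MT} via a Moser--Tardos resampling table, identify the return conditions of \Cref{Alg:estimate} (\Cref{Line:terminate-path}, \Cref{Line:terminate-cycle-2}, and exiting the loop at \Cref{Line:terminate-sink}) with cases (b2), (b1), and (a) of \Cref{observation:cycle} via an induction showing $P$ is always a directed path from $v$ with all other visited edges pointing toward it, and conclude by \Cref{lemma:MT-SFO-correctness}. The only cosmetic difference is that the paper ``completes'' the partial run and invokes the order-independence of the output for extremal instances (from \cite{GJ21}) to equate it with an arbitrary run of \Cref{Alg:MT}, whereas you fix the popping order (pop the tip of $P$) up front and rely on \Cref{lemma:MT-SFO-correctness} holding for that adaptive order; both are valid.
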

\begin{proof}
    We claim that there exists a coupling between the execution of \Cref{Alg:MT} (with input $G,S$ and output $\sigma$), and $\sample(G,S,v)$ (with output $x$), such that 
    \begin{equation}\label{eq:equivalent-condition}
         v\text{ is not a sink under }\sigma\Longleftrightarrow  x=1.
    \end{equation}
    The claim implies the lemma because of \Cref{lemma:MT-SFO-correctness}.

    To prove the claim, we first construct our coupling.
    We use the resampling table idea of Moser and Tardos \cite{moser2010constructive}.
    For each edge, we associate it with an infinite sequence of independent random variables,
    each of which is a uniform orientation.
    This forms the ``resampling table''.
    Our coupling uses the same resampling table for both \Cref{Alg:MT} and \Cref{Alg:estimate}.
    As showed in \cite{moser2010constructive},
    the execution of \Cref{Alg:MT} can be viewed as first constructing this (imaginary) table,
    and whenever the orientation of an edge is randomised, we just reveal and use the next random variable in the sequence.
    For \Cref{Alg:estimate},
    we reveal the orientation of an edge when its status changes from \textsf{unvisited} to \textsf{visited} in \Cref{Line:mark}.
    We execute \Cref{Line:append} if the random orientation from the resampling table is $(u,w)$,
    and otherwise do nothing.
    Namely, in the latter case, the revealed orientation is $(w,u)$, and we just move forward the ``frontier'' of that edge by one step in the resampling table.
    We claim that \eqref{eq:equivalent-condition} holds with this coupling.

    Essentially, the claim holds since for extremal instances, given a resampling table,
    the order of the bad events resampled in \Cref{Alg:MT} does not affect the output.
    This fact is shown in \cite[Section 4]{GJ21}. (See also \cite[Lemma 2.2]{CPP02} for the case of $S=V$.)
    We can ``complete'' \Cref{Alg:estimate} after it finishes. 
    Namely, once \Cref{Alg:estimate} terminates, we randomise all edges that are not yet oriented, and resample edges adjacent to sinks until there are none, using the same resampling table.
    Note that at the end of \Cref{Alg:estimate}, some edges may be marked $\textsf{unvisited}$ but are still oriented.
    Suppose that the output of the completed algorithm is $\sigma'$, an orientation of all edges.
    This completed algorithm is just another implementation of \Cref{Alg:MT} with a specific order of resampling bad events.
    Thus the fact mentioned earlier implies that $\sigma'=\sigma$.
    
    On the other hand, the termination conditions of \Cref{Alg:estimate} correspond to the cases of \Cref{observation:cycle}.
    One can show, via a simple induction over the while loop,
    that at the beginning of the loop,
    the path $P$ is always a directed path starting from $v$,
    and all other visited edges point towards the path $P$.
    This implies that \Cref{Line:terminate-cycle-2} corresponds to case \emph{(b1)} in \Cref{observation:cycle}, 
    \Cref{Line:terminate-path} corresponds to case \emph{(b2)},
    and exiting the while loop in \Cref{Line:terminate-sink} corresponds to case \emph{(a)}.
    When \Cref{Alg:estimate} terminates, we have decided whether or not $v$ is a sink.
    By \Cref{observation:cycle}, this decision stays the same under $\sigma'$.
    As~$\sigma'=\sigma$,~\eqref{eq:equivalent-condition} holds.
\end{proof}

We then analyse the efficiency of \Cref{Alg:estimate}. 
The main bottleneck is when there are degree $2$ vertices.
It would take $\Omega(\ell^2)$ time to resolve an induced path of length $\ell$.
We then focus on the case where the minimum vertex degree is at least $3$. Note that in this case we have $\Omega_S\neq\emptyset$ for any $S\subseteq V$.
For two distributions $P$ and $Q$ over the state space $\Omega$, their total variation distance is defined by $\dtv(P,Q)\defeq\sum_{x\in\Omega}|P(x)-Q(x)|/2$.
For two random variables $x\sim P$ and $y\sim Q$, we also write $\dtv(x,y)$ to denote $\dtv(P,Q)$.
Then we have the following lemma.

\begin{lemma}[efficient truncation of \Cref{Alg:estimate}]\label{lemma:efficient-truncation}
Let $G=(V,E)$ be a graph with minimum degree at least $3$. Let $S\subseteq V$
, $v\notin S$ and $0<\varepsilon<1$. Let $x$ be the output of $\sample(G,S,v)$, and $x'$ constructed as
\begin{itemize}
    \item if $\sample(G,S,v)$ terminates within $72\ln(73/\varepsilon)$ executions of \Cref{Line:sample}, let $x'=x$;
    \item otherwise, let $x'=1$.
\end{itemize}
Then, it holds that
\[
\dtv(x,x')\leq \varepsilon.
\]
\end{lemma}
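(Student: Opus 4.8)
The plan is to bound the total variation distance $\dtv(x,x')$ by the probability that the true process $\sample(G,S,v)$ fails to terminate within $T \defeq 72\ln(73/\eps)$ executions of \Cref{Line:sample}. Since $x'$ differs from $x$ only on the event $\{\text{more than }T\text{ executions of \Cref{Line:sample} occur}\}$, a standard coupling argument gives
\[
\dtv(x,x') \le \Pr{\text{$\sample(G,S,v)$ uses more than $T$ samples}}.
\]
So the task reduces to showing this probability is at most $\eps$. The key is to track the random quantity $\ell_t \defeq |V(P)| - 1$, the length of the directed path $P$ after $t$ executions of \Cref{Line:sample}. The algorithm can only return $0$ (i.e.\ $v$ is a sink) when $P$ has been emptied, which requires $\ell_t = 0$; and it can only still be running after $T$ steps if it has neither returned $0$ nor $1$ up to then. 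I would argue that conditioned on not yet having terminated, $\ell_t$ behaves like a random walk with a positive drift (a submartingale), so reaching length $0$ from a moderately large value — or failing to reach either absorbing condition — is exponentially unlikely in $T$.

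The main steps, in order: (1) Set up the potential $\ell_t$ and observe that each execution of \Cref{Line:sample} either appends a vertex ($\ell$ increases by $1$, probability $1/2$) or reveals an incoming edge; in the latter case, if all edges at the current endpoint $u$ are now visited, $u$ is a sink in $S$ and we backtrack ($\ell$ decreases by $1$), otherwise $\ell$ is unchanged and we will examine the next unvisited edge at $u$. (2) Show that because the minimum degree is at least $3$, after the first failed sample at a vertex $u$ there remains at least one more unvisited edge, so the ``stay'' transitions cannot persist too long at a single vertex; more importantly, over any window the number of $+1$ steps stochastically dominates the number of $-1$ steps, because a $-1$ step at $u$ can only happen after all $\deg(u)\ge 3$ edges at $u$ came up incoming, an event of probability $\le 1/8$ relative to at least one of them coming up outgoing. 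Concretely I would define a stopping-time-adapted submartingale $Y_t$ with $Y_t \le \ell_t$ and increments bounded below in expectation by a positive constant, or more cleanly, compare $\ell_t$ to a biased random walk on $\mathbb{Z}_{\ge 0}$ that moves $+1$ with probability at least $3/4$ at each ``decision'' and show it is absorbed (by hitting a configuration that forces return of $1$ via rule (b1) or (b2), e.g.\ revisiting a path vertex, which happens with probability bounded away from $0$ once the path is long) quickly. (3) Apply a concentration/large-deviation bound (Azuma–Hoeffding for the submartingale, or an explicit geometric-tail computation for the biased walk, or an optional-stopping argument on an exponential supermartingale $\lambda^{-\ell_t}$ for suitable $\lambda<1$) to conclude that the probability of surviving $T$ steps without termination is at most $e^{-cT}$ for an explicit constant $c$, and then check that the constants $72$ and $73/\eps$ are chosen so that $e^{-T/72}\cdot(\text{polynomial factor}) \le \eps$; the ``$73$'' absorbs the polynomial prefactor from the union bound over which termination rule fires.

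The hard part will be step (2): making precise the claim that $\ell_t$ has a genuine positive drift \emph{uniformly}, despite the ``stay'' steps and despite the fact that the relevant degrees are the degrees in $G$ which can be arbitrarily large. One must be careful that a large-degree vertex $u$ could in principle produce many consecutive incoming edges before an outgoing one appears — but each such reveal has probability exactly $1/2$ of being outgoing, so the number of reveals at $u$ before moving forward is geometric with mean $2$, and crucially a backtrack from $u$ requires \emph{all} of $u$'s remaining edges to be incoming, which (given $\deg(u)\ge 3$ and that we only backtrack after exhausting the list) I must show contributes a net favorable drift. An alternative cleaner route, which I would pursue if the direct drift computation gets messy, is to only count the ``committed'' steps — executions of \Cref{Line:sample} that come up outgoing, causing an append, versus those that exhaust a vertex causing a backtrack — and show the former dominate, then separately bound the number of ``wasted'' in-between reveals at each vertex by a geometric random variable, summing to at most a constant factor overhead. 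Either way, once the positive-drift claim is nailed down, the exponential tail and the arithmetic matching $72\ln(73/\eps)$ to accuracy $\eps$ are routine.
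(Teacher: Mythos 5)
There is a genuine gap, and it occurs at your very first reduction. You bound $\dtv(x,x')$ by the probability that $\sample(G,S,v)$ fails to terminate within $T=72\ln(73/\eps)$ executions of \Cref{Line:sample}, and then plan to show this probability is at most $\eps$. That probability is in general \emph{not} small: take a $3$-regular graph of girth much larger than $T$ and $S=V\setminus\{v\}$. Then the path $P$ typically just keeps growing — it cannot trigger \Cref{Line:terminate-path} (no vertex outside $S$ is reachable except $v$ itself) and cannot trigger \Cref{Line:terminate-cycle-2} before roughly girth-many steps, since revisiting $V(P)$ or finding a visited edge into $V(P)$ requires closing a cycle. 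So the non-termination probability is close to $1$, and your step (3) cannot succeed; likewise your claim in step (2) that once the path is long a rule of type (b1)/(b2) fires with probability bounded away from $0$ is false in this example. The point of the truncation is not that the algorithm terminates quickly, but that \emph{if} it runs long, its eventual output is almost certainly $1$, so declaring $x'=1$ is almost always correct.

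The correct target event, and the one the paper bounds, is $\{x=0\}\cap\{\text{more than }T\text{ executions}\}$: output $0$ requires the path to empty, i.e.\ $X_t=0$ for the path length $X_t$, so the error is at most $\sum_{t\ge T}\Pr{X_t=0}$. Your drift machinery (submartingale for the path length using minimum degree $3$, Azuma–Hoeffding or an exponential supermartingale) is exactly the right tool, but it must be aimed at this event: one shows $\Pr{X_t=0}\le e^{-t/72}$ for each $t$ and sums the geometric tail, which is where the constants $72$ and $73$ come from. Two further cautions if you rewrite along these lines: the drift accounting has to handle the fact that after a backtrack all edges incident to the popped vertex (including the edge to its predecessor on $P$) are reset to unvisited, so your ``all $\deg(u)\ge 3$ edges came up incoming, probability $\le 1/8$'' heuristic is not quite the right bookkeeping; the paper instead compensates the drift per execution of \Cref{Line:sample} (the correction term $c_i\in\{0,1/2\}$) and uses minimum degree $3$ only to show a zero-correction step is always followed by a positive one, giving $X_i-Y_i\ge i/4$ for an auxiliary martingale $Y_i$ with bounded increments.
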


\begin{proof}
We track the length of the path $P$ during the execution of \Cref{Alg:estimate}. When an edge is chosen in \Cref{Line:estimate-choose} and sampled in \Cref{Line:sample} of \Cref{Alg:estimate}, the following happens:
\begin{itemize}
    \item with probability $1/2$, $w$ is appended to $P$ and the length of $P$ increases by one;
    \item with probability $1/2$, $\{u,w\}$ is marked as \textsf{visited}, and the length of $P$ decreases by one in the next iteration if and only if $\{u,w\}$ was the last \textsf{unvisited} edge incident to $u$.
\end{itemize}
Let $X_i$ be the random variable denoting the length of $P$ after the $i$-th execution of \Cref{Line:sample} in \Cref{Alg:estimate}.
Then the observation above implies that $\{X_i\}_{i\ge 0}$ forms a submartingale. We construct another sequence of random variables $\{Y_i\}_{i\ge 0}$ modified from $\{X_i\}_{i\ge 0}$ as follows:
\begin{itemize}
    \item $Y_0=X_0=1$.
    \item At the $i$-th execution of \Cref{Line:sample} in \Cref{Alg:estimate}:
    \begin{itemize}
        \item if $\{u,w\}$ is the only \textsf{unvisited} edge incident to $u$, set $Y_{i+1}=X_{i+1}-X_{i}+Y_{i}$,
        \item otherwise, set $Y_{i+1}=X_{i+1}-X_{i}+Y_{i}-1/2$.
    \end{itemize}
\end{itemize}
It can be verified that the sequence $\{Y_i\}_{i\ge 0}$ is a martingale. 
\begin{claim}\label{claim:deviation}
    For any $i\geq0$, $X_i-Y_i\geq i/4$.
\end{claim}
\begin{proof}
    Note that $X_i-Y_i=X_{i-1}-Y_{i-1}+c_i$ where $c_i=0$ if $\{u,w\}$ is the only \textsf{unvisited} edge incident to $u$ at the $i$-th execution of \Cref{Line:sample} in \Cref{Alg:estimate}, and $c_i=1/2$ otherwise. 
    Then we can write $X_i-Y_i$ as
    \[X_i-Y_i=X_0-Y_0+\sum_{j=1}^ic_j.\]
    For any $i$ such that $c_i=0$, let $i'$ be the last index such that $c_{i'}=0$, or $i'=0$ if no such $i'$ exists.
    Since the minimum degree of $G$ is at least $3$, when we append any vertex $u$ to $P$, there are at least two unvisited edges incident to $u$.
    It implies that there must be some $j$ such that $i'< j< i$ and $c_j=1/2$. 
    Thus $X_i-Y_i=\sum_{k=1}^ic_k\geq i/4$. 
\end{proof}


Next we show that if \Cref{Alg:estimate} doesn't terminate after $72\ln(73/\varepsilon)$ steps, with high probability the length of the path will not return to $0$.
As $\{Y_i\}_{i\ge 0}$ is a martingale and $\abs{Y_{i+1}-Y_{i}}\leq 3/2$ for all $i\ge 0$, the Azuma–Hoeffding inequality implies that, for any $T>0$ and $C>0$,
\begin{align}\label{eqn:Azuma}
    \Pr{Y_T-Y_0\le -C} \le \exp\left(\frac{C^2}{9T/2}\right).
\end{align}
Thus,
\[
    \Pr{X_{T}=0}\leq \Pr{Y_{T}\leq -T/4}\leq \Pr{Y_{T}-Y_0\leq -T/4}\leq \mathrm{exp}\left(-T/72\right),
\]
where the first inequality is by \Cref{claim:deviation}, and the last inequality is by plugging $C=T/4$ into \eqref{eqn:Azuma}.
Then, we have
\begin{align}\label{eqn:run-time-bound}
\sum\limits_{T=\lceil 72\ln\frac{73}{\varepsilon}\rceil}^{\infty} \Pr{X_{T}=0}\leq \sum\limits_{T=\lceil 72\ln\frac{73}{\varepsilon}\rceil}^{\infty}\mathrm{exp}\left(-T/72\right)\leq \sum\limits_{T= 72\ln\frac{73}{\varepsilon}}^{\infty}\mathrm{exp}\left(-T/72\right)=\frac{\eps}{73(1-\mathrm{e}^{-1/72})}<\varepsilon.    
\end{align}

To finish the proof, we couple $x$ and $x'$ by the same execution of \Cref{Alg:estimate}. 
Thus, if it terminates within $72\ln(73/\varepsilon)$ executions of \Cref{Line:sample},
then $x=x'$ with probability $1$.
If not, \eqref{eqn:run-time-bound} implies that $x=0$ with probability at most $\eps$.
As we always output $x=1$ in this case, $x'\neq x$ with probability at most $\eps$,
which finishes the proof.
\end{proof}

Note that \Cref{lemma:efficient-truncation} does not require $\Omega_S\neq\emptyset$.
This is because it is implied by the minimum degree requirement.
This implication is an easy consequence of the symmetric Shearer's bound.
It is also directly implied by \Cref{lemma:marginal-lower-bound} which we show next.

\section{Applications of the local sampler}
\label{sec:applications}

We show the main theorems in this section.
\Cref{lemma:efficient-truncation} implies an additive error on the truncated estimator.
As we are after relative errors in approximate counting, we need a lower bound of the marginal ratio.

\begin{lemma}\label{lemma:marginal-lower-bound}
   Let $G=(V,E)$ be a graph with a minimum degree at least $3$. For any $S\subseteq V$ and $v\notin S$,
   it holds that $|\Omega_S|\neq 0$ and
    \[
    \mu_S(v \text{ is not a sink})> \frac{1}{2}.
    \]
\end{lemma}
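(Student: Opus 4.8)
The plan is to establish a lower bound on $\mu_S(v \text{ is not a sink})$ by analyzing the behavior of $\sample(G,S,v)$, which by \Cref{lem:alg2-correctness} outputs $x=1$ with probability exactly $\mu_S(v\text{ is not a sink})$. So it suffices to show $\Pr{x=1} > 1/2$. First I would observe that the only way $\sample$ returns $x=0$ is by exiting the while loop in \Cref{Line:terminate-sink}, i.e., when the path $P$ has been emptied back to length $0$ and $v$ itself turns out to be a sink (all its incident edges point inward). Every other outcome — forming a cycle, a path leading to a cycle, or a path reaching a vertex outside $S$ — yields $x=1$. So the task reduces to bounding the probability that the random path process $\{X_i\}$ from \Cref{lemma:efficient-truncation} ever returns to $0$, starting from $X_0=1$.

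The key input is the submartingale/martingale decomposition already set up in the proof of \Cref{lemma:efficient-truncation}: $X_i = Y_i + (X_i - Y_i)$ where $\{Y_i\}$ is a martingale with bounded increments $|Y_{i+1}-Y_i|\le 3/2$, and \Cref{claim:deviation} gives the drift bound $X_i - Y_i \ge i/4$. Thus $X_i = 0$ forces $Y_i \le -i/4$. The event ``$v$ is a sink'' is contained in $\bigcup_{i\ge 1}\{X_i = 0\}$, so by a union bound,
\[
\mu_S(v\text{ is a sink}) \le \sum_{i\ge 1}\Pr{X_i = 0} \le \sum_{i\ge 1}\Pr{Y_i - Y_0 \le -i/4} \le \sum_{i\ge 1}\exp(-i/72),
\]
using \eqref{eqn:Azuma} with $C = i/4$, $T = i$, exactly as in \Cref{lemma:efficient-truncation}. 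The remaining step is to check that $\sum_{i\ge 1}\exp(-i/72) = \frac{\mathrm{e}^{-1/72}}{1-\mathrm{e}^{-1/72}} < 1/2$; a quick numerical check gives this sum is roughly $71.5$, which is \emph{not} less than $1/2$, so the crude union bound over all $i\ge 1$ is too weak and one must be more careful — this is the main obstacle.

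To fix this, I would instead control the first few steps directly and only use the tail bound beyond some threshold. Concretely: with probability $1/2$ the very first execution of \Cref{Line:sample} appends a vertex (so $X_1 = 2$), and since the minimum degree is at least $3$, from a length-$2$ path the process has genuine upward drift; more usefully, I would argue that conditioned on reaching length $1$ after the initial exploration of $v$'s edges, escape to a larger length happens before return to $0$ with probability bounded below by a gambler's-ruin-type estimate that exploits the $\ge 3$ degree (each newly appended vertex contributes at least two fresh unvisited edges, biasing the walk upward). Alternatively — and this is probably cleanest — split the sum: $\mu_S(v\text{ is a sink}) = \sum_{i}\Pr{X_i=0,\ X_j>0\ \forall j<i}$, bound the first-return probability at small $i$ by a direct combinatorial argument on $v$'s $\ge 3$ incident edges (the chance all point inward is at most $2^{-3} = 1/8$ before any resampling, and resampling only helps), and bound the contribution of large $i$ by the exponential tail. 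One then needs the two pieces to sum to strictly less than $1/2$. I expect the honest argument to track the path length as a walk with upward drift from the start and invoke an optional-stopping or supermartingale argument for the return probability, rather than the lossy union bound; getting the constant below $1/2$ is the delicate part, though the $\ge 3$ degree assumption gives substantial room.
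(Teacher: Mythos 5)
There is a genuine gap, and you identify it yourself: the only quantitative bound you actually derive is the union bound $\sum_{i\ge 1}\exp(-i/72)\approx 71.5$, which misses the target $1/2$ by two orders of magnitude, and none of the proposed repairs is carried out. The sketched fixes are also shaky where they are concrete: the claim that ``the chance all point inward is at most $2^{-3}$ before any resampling, and resampling only helps'' is an unproven (and in fact false in general) monotonicity statement --- conditioning on the neighbours of $v$ not being sinks biases their edges to point \emph{away} from them, possibly toward $v$, and indeed the true bound in the paper has the form $1-2^{-d(u)}(3/2)^k$, so the sink probability under $\mu_S$ can exceed $2^{-d(v)}$. A drift/gambler's-ruin or branching-process analysis of $\sample$ (e.g.\ a fixed-point bound on the probability that an appended vertex is ever popped, exploiting the $\ge 2$ fresh edges at each new vertex) could plausibly be pushed below $1/2$, but that computation is exactly the missing content, so the proposal as written does not prove the lemma. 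There is also a smaller logical issue: your argument invokes \Cref{lem:alg2-correctness}, which assumes $|\Omega_S|\neq 0$, yet nonemptiness is part of the statement to be proved; you would need to dispose of it separately (e.g.\ by noting that $\Omega_S=\emptyset$ forces a tree component inside $S$, impossible at minimum degree $3$), which the proposal never does.

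For comparison, the paper does not argue via the sampler at all. It proves the bound analytically, in the spirit of Shearer's lemma: with $p_u=2^{-d(u)}$ and $q_J$ the alternating independent-set sum, it shows by induction on $|J|$ that $P_J=q_J>0$ (using extremality to get the recursion $P_J=P_{J\setminus\{u\}}-p_u P_{J\setminus\Gamma^+(u)}$, matching the deletion--contraction recursion for $q_J$) together with the ratio bounds $q_J/q_{J\setminus\{u\}}>1/2$ when $\Gamma(u)\subseteq J$ and $>2/3$ otherwise; the lemma, including $|\Omega_S|\neq 0$, then falls out since $\mu_S(v\text{ is not a sink})=q_{S\cup\{v\}}/q_S$. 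That inductive two-case bookkeeping is what replaces the delicate constant-chasing your probabilistic route would require.
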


The proof of \Cref{lemma:marginal-lower-bound} can be viewed as an application of the symmetric Shearer's Lemma~\cite{She85} on SFO, and is deferred to \Cref{sec:appendix-lower-bound}. Note that the minimum degree requirement is essential for such a marginal lower bound to hold, as the marginal ratio in \Cref{lemma:marginal-lower-bound} can be of order $O(1/n)$ when $G$ is a cycle and $S=V\setminus \{v\}$.

We then show the two approximate counting algorithms first, namely \Cref{thm:sfo-deterministic-counting} and \Cref{thm:sfo-fast-counting}. 

\begin{proof}[Proof of \Cref{thm:sfo-deterministic-counting}]
By \eqref{eq:decomposition},
we just need to ${\eps/(2n)}$-approximate $\mu_{{V_{i-1}}}(v_i\text{ is not a sink})$ for each $i$ to $\eps$-approximate $|\Omega_V|$, the number of sink-free orientations to $G$. 
The only random choice \Cref{Alg:estimate} makes is \Cref{Line:sample}.
In view of \Cref{lemma:efficient-truncation}, we enumerate the first $72\ln(292n/\eps)$ random choices of choices \Cref{Alg:estimate},
and just output $1$ if the algorithm does not terminate by then.
Let the estimator be the average of all enumeration.
Note that \Cref{lemma:marginal-lower-bound} implies that $\Omega_{V_i}\neq\emptyset$ for any $i$.
Then, \Cref{lem:alg2-correctness,lemma:efficient-truncation} imply that the estimator is an $\eps/(4n)$ additive approximation.
By \Cref{lemma:marginal-lower-bound}, it is also an $\eps/(2n)$ relative approximation,
which is what we need.
    
For the running time, there are $n$ marginals, it takes $\exp(72\ln(292n/\varepsilon))$ enumerations for each marginal probability, 
and each enumeration takes time at most $O(\ln(292n/\varepsilon))$ time. 
Therefore, the overall running time is bounded by $O(n(n/\varepsilon)^{72}\log(n/\varepsilon))$.
\end{proof}

\begin{proof}[Proof of \Cref{thm:sfo-fast-counting}]
We use \eqref{eq:decomposition} again.
Denote $\nu_i=\mu_{V_{i-1}}(v_i\textrm { is not a sink})$ and $\nu=\prod_{i=1}^n\nu_i$.
Let $\widetilde{X}_i\defeq\frac{1}{n}\sum_{i=1}^nx_{i,t}'$ be the average of $n$ independent samples from \Cref{Alg:estimate} truncated after $72\ln(73\times 12n/\eps)$ executions of \Cref{Line:sample}.
Let $\widetilde{X}\defeq\prod_{i=1}^n\widetilde{X}_i$ be an estimator for $\nu$.


For any $i$ and $t$, let $\widetilde{\nu}_i$ be the expectation of $x_{i,t}$ (note that it does not depend on $t$).
By \Cref{lem:alg2-correctness,lemma:efficient-truncation}, 
$\abs{\widetilde{\nu}_i-\nu_i}\le \frac{\eps}{12n}$.
By \Cref{lemma:marginal-lower-bound}, $1-\frac{\eps}{6n}\le\frac{\widetilde{\nu}_i}{\nu_i}\le 1+\frac{\eps}{6n}$.
Let $\widetilde{\nu}=\prod_{i=1}^n \widetilde{\nu}_i$ so that $\E{\widetilde{X}}=\widetilde{\nu}$.
Then, as $0<\eps<1$, 
\begin{align}\label{eqn:exp-error}
    1-\frac{\eps}{3}\le\frac{\widetilde{\nu}}{\nu}\le 1+\frac{\eps}{3}.
\end{align}

We bound $\Var{\widetilde{X}_i}$ and $\Var{\widetilde{X}}$ next.
First,
\begin{align*}
    \Var{\widetilde{X}_i}
    =\Var{\frac{1}{n}\sum_{t=1}^nx_{i,t}'}
    =\frac{1}{n^2}\sum_{t=1}^n\Var{x_{i,t}'}
    \leq\frac{1}{n},
\end{align*}
as each $x_{i,t}'$ is an indicator variable.
Then,
\begin{align*}
    \frac{\Var{\widetilde{X}}}{\left(\E{\widetilde{X}}\right)^2}&=\frac{\E{\widetilde{X}^2}}{\left(\E{\widetilde{X}}\right)^2}-1
    =\frac{\prod_{i=1}^n\E{\widetilde{X}_i^2}}{\prod_{i=1}^n\left(\E{\widetilde{X}_i}\right)^2}-1
    =\prod_{i=1}^n\left(1+\frac{\Var{\widetilde{X}_i}}{\left(\E{\widetilde{X}_i}\right)^2}\right)-1\\
    &\leq\left(1+\frac{4}{n}\right)^n-1<\mathrm{e}^4-1<54. \tag{by \Cref{lemma:marginal-lower-bound}} 
\end{align*} 
To further reduce the variance, let $\widehat{X}$ be the average of $N$ independent samples of $\widetilde{X}$, where $N\defeq\lceil36\times 54/\eps^2\rceil$.
Then, $\mathbf{Var}[\widehat{X}]\leq\frac{\Var{\widetilde{X}}}{N}$.
By Chebyshev's bound, we have
\begin{align*}
    \Pr{\abs{\widehat{X}-\widetilde{\nu}}\geq\frac{\eps}{3}\cdot\widetilde{\nu}}
    \leq\frac{9\mathbf{Var}[\widehat{X}]}{\eps^2\widetilde{\nu}^2}
    \leq\frac{9\times54\eps^2\widetilde{\nu}^2}{36\times 54}\cdot\frac{1}{\eps^2\widetilde{\nu}^2}
    \leq\frac{1}{4}. 
\end{align*}
Thus with probability at least $\frac{3}{4}$, 
we have that $(1-\frac{\varepsilon}{3})\widetilde{\nu}\leq\widehat{X}\leq(1+\frac{\varepsilon}{3})\widetilde{\nu}$. 
By \eqref{eqn:exp-error}, when this holds, $(1-\eps)\nu\leq\widehat{X}\leq(1+\eps)\nu$.
It is then easy to have an $\eps$-approximation of $\abs{\Omega_V}$.

For the running time, each sample $x_{i,t}'$ takes $O(\log(n/\eps))$ time. 
We draw $n$ samples for each of the $n$ vertices, and we repeat this process $N=O(\eps^{-2})$ times. Thus, the total running time is bounded by $O((n/\eps)^2\log(n/\eps))$. 
\end{proof}

For \Cref{thm:sfo-fast-sampling}, we will need a modified version of \Cref{Alg:estimate} to sample from the marginal distributions of the orientation of edges. 
This is given in \Cref{Alg:estimate-edge}.
It takes as input a subset of vertices $S\subseteq V$ and an edge $e\in E$, then outputs a random orientation $\sigma_e$ following the marginal distribution induced from $\mu_S$ on $e$. 
The differences between \Cref{Alg:estimate} and \Cref{Alg:estimate-edge} are:
\begin{itemize}
    \item In \Cref{Alg:estimate}, the number of vertices in $P$ is initialised as $|V(P)|=1$, while in \Cref{Alg:estimate-edge}, it is initialised as $|V(P)|=2$.
    \item When $|V(P)|=1$ and all edges incident to the only vertex $u$ in $P$ are marked as visited:
    \begin{itemize}
        \item In \Cref{Alg:estimate}, the algorithm terminates and returns $0$;
        \item In \Cref{Alg:estimate-edge}, the algorithm terminates if and only if $u\notin S$, and would reinitialise the algorithm otherwise.
    \end{itemize}
\end{itemize}

\begin{algorithm} 
\caption{$\esample(G,S,e)$} \label{Alg:estimate-edge}
\SetKwInOut{Instance}{Instance}
\SetKwInOut{Input}{Input}
\SetKwInOut{Output}{Output}
\SetKwInOut{Data}{Data}
\SetKwIF{WP}{ElseIf}{Else}{with probability}{do}{else if}{else}{endif}
 \Input{
an undirected graph $G=(V,E)$, a subset of vertices $S\subseteq V$, and an edge $e=\{x,y\}\in E$;}

\Output{a random orientation $\sigma_e\in \{(x,y),(y,x)\}$ of $e$;}
Initialise a mapping $M:E\to \{\textsf{visited},\textsf{unvisited}\}$ so that $\forall e\in E$, $M(e)=\textsf{unvisited}$\;
Let $P$ be a (directed) path initialised as $P=(x,y)$ or $P=(y,x)$ with equal probability,
and mark $e$ \textsf{visited}\;
\While{\textnormal{\textsf{True}} \label{Line:edge-terminate-sink}}{
    Let $u$ be the last vertex of $P$\label{Line:edge-estimate-choose}\;
    \lIf{$|V(P)|\geq 2$ and $u\notin S$\label{Line:edge-terminate-path}}{\Return the first edge in $P$}
    \uIf{all edges incident to $u$ are marked \textnormal{\textsf{visited}}\label{Line:edge-sink}}{
        mark all edges incident to $u$ as \textsf{unvisited}\;
        \uIf{$\abs{V(P)}=1$}{
            rerandomise $P$ as $P=(x,y)$ or $P=(y,x)$ with equal probability\;
        }\Else{
            remove $u$ from $P$\label{Line:edge-removal}\;
        }
    }
    \Else{
        let $e=\{u,w\}$ be the first \textsf{unvisited} edge incident to $u$\;
        mark $e$ as \textsf{visited}\;
        \WP{$1/2$ \label{Line:edge-sample}}{
            \If{$w\in V(P)$, or there is a \textnormal{\textsf{visited}} edge $(w,w')$ for some $w'\in V(P)$\label{Line:edge-terminate-cycle}}{\Return the first edge in $P$\;}
            append $w$ to the end of $P$\label{Line:edge-append}\;
        }
    }
}
\end{algorithm}

The correctness of \Cref{Alg:estimate-edge} is due to a coupling argument similar to \Cref{lem:alg2-correctness}.
We couple \Cref{Alg:MT} and \Cref{Alg:estimate-edge} by using the same resampling table.
By the same argument as in \Cref{lem:alg2-correctness}, given the same resampling table,
the orientation of $e$ is the same in the outputs of both \Cref{Alg:MT} and \Cref{Alg:estimate-edge}.
Thus, $\sigma_e$ follows the desired marginal distribution by \Cref{lemma:MT-SFO-correctness}.
As for efficiency, we notice that the same martingale argument as in \Cref{lemma:efficient-truncation} applies to the length of $P$ as well.
Early truncation of the edge sampler only incurs a small error. 
However, we need some extra care for the self-reduction in the overall sampling algorithm.

\begin{proof}[Proof of \Cref{thm:sfo-fast-sampling}]
We sequentially sample the orientation of edges in $G$ (approximately) from its conditional marginal distribution. 
Suppose we choose an edge $e=\{u,v\}$,
and the sampled orientation is $(u,v)$.
Then, we can remove $e$ from the graph, and let $S\gets S\setminus\{u\}$.
The conditional distribution is effectively the same as $\mu_S$ in the remaining graph.

One subtlety here is that doing so may create vertices of degree $\le 2$.
To cope with this, we keep sampling edges adjacent to one vertex in $S$ as much as possible before moving on to the next.
Suppose the current focus is on $v$.
We use $\esample$ to sample the orientation of edges adjacent to $v$ one at a time until either $v$ is removed from $S$ or the degree of $v$ becomes $1$.
In the latter case, the leftover edge must be oriented away from $v$, which also results in removing $v$ from $S$.
Note that, when either condition holds, the last edge sampled is oriented as $(v,u)$ for some neighbour $u$ of $v$.
We then move our focus to $u$ if $u\in S$, and move to an arbitrary vertex in $S$ otherwise.
The key property of choosing edges this way is that, whenever $\esample(G,S,e)$ is invoked, there can only be at most one vertex of degree $2$ in $S$,
and if it exists, it must be an endpoint of $e$.
If all vertices are removed from $S$, we finish by simply outputing a uniformly at random orientation of the remaining edges.

To maintain efficiency, we truncate $\esample(G,S,e)$ in each step of the sampling process.
More specifically, for some constant $C$, we output the first edge of $P$ once the number of executions of \Cref{Line:edge-sample} in \Cref{Alg:estimate-edge} exceeds $C\ln(m/\eps)$.
We claim that there is a constant $C$ such that the truncation only incurs an $\eps/m$ error in total variation distance between the output and the marginal distribution.
This is because the same martingale argument as in \Cref{lemma:efficient-truncation} still applies.
Note that if $P$ visits any vertex not in $S$, the algorithm immediately terminates.
Thus degrees of vertices not in $S$ do not affect the argument.
Moreover, the only degree $2$ vertex in $S$, say $x$, is adjacent to the first edge $e=\{x,y\}$ of $P$.
If $e$ is initialised as $\{x,y\}$, then when $P$ returns to $x$, the algorithm immediately terminates.
Otherwise $e$ is initialised as $\{y,x\}$, in which case there is no drift in the first step of $P$.
Thus, as long as we adjust the constant to compensate the potential lack of drift in the first step,
the martingale argument in \Cref{lemma:efficient-truncation} still works and the claim holds.
As the truncation error is $\eps/m$, we may couple the untruncated algorithm with the truncated version,
and a union bound implies that the overall error is at most $\eps$.

As we process each edge in at most $O(\log(m/\eps))$ time,
the overall running time is then $O(m\log(m/\eps))$.
This finishes the proof of the fast sampling algorithm.
%
\end{proof}

\subsection{Proof of the marginal lower bound}\label{sec:appendix-lower-bound}

Now we prove the lower bound of marginal ratios for SFOs, namely, \Cref{lemma:marginal-lower-bound}.
Let us first recall the variable framework of the local lemma.
Consider the probability space $\+P$ of a uniformly random orientation of $G$ (namely orienting each edge independently and uniformly at random), and each $u\in S$ corresponding to a bad event $\+E_u$ of $u$ being a sink.
We then have 
\[
p_u\defeq \Pr[\+P]{\+E_u}=2^{-d(u)}, \quad \forall u\in V,
\]
where $d(u)$ denotes the degree of $u$.
We also need some definitions, essentially from~\cite{HV17} and small variations from those in~\cite{She85}.

\begin{definition}
    We define the following notations.
\begin{itemize}
    \item Let $\Ind(G)$ denote all independent sets of $G$, i.e.,
    \[
    \Ind(G)\defeq \{I\subseteq V\mid I\text{ contains no edge of }G\}.
    \]
    \item For $J\subseteq V$, let
    \begin{align}\label{eqn:ind-poly}
        q_J\defeq \sum\limits_{\substack{I\in \Ind(G)\\ I\subseteq J}}(-1)^{\abs{I}}\prod\limits_{u\in I}p_u,
    \end{align}
    and
    \[
    P_J\defeq \Pr[\+P]{\bigwedge\limits_{u\in J}\neg \+E_u},
    \]
    which is the probability under $\+P$ that all vertices in $J$ are sink-free. 
\end{itemize}
\end{definition}
We then proceed to the proof.

\begin{proof}[Proof of \texorpdfstring{\Cref{lemma:marginal-lower-bound}}{Lemma 4.1}]
For $u\in V$, let $\Gamma(u)$ denote the set of neighbours of $u$ in $G$.
We claim that for any $J\subseteq V$ and $u\in J$:
\begin{enumerate}
    \item $P_J=q_J>0$;\label{item:shearer-induction-1}
    \item $\frac{q_J}{q_{J\setminus \{u\}}}>\begin{cases}
        \frac{1}{2} & \Gamma(u)\subseteq J;\\
        \frac{2}{3} & \text{otherwise}.
    \end{cases}$\label{item:shearer-induction-3}
\end{enumerate}
\Cref{lemma:marginal-lower-bound} immediately follows from the claim as $P_S=\frac{\abs{\Omega_S}}{2^{\abs{E}}}>0$ and 
\[
\mu_S(v \text{ is not a sink})=\Pr[\+P]{\neg \+E_v\mid \bigwedge\limits_{u\in S}\neg \+E_u}=\frac{P_{S\cup \{v\}}}{P_S} = \frac{q_{S\cup \{v\}}}{q_S}>\frac{1}{2},
\]
where the last equality is by \Cref{item:shearer-induction-1} and the inequality is by \Cref{item:shearer-induction-3}. 

We then prove claim by induction on the size of $J$. The base case is when $J=\emptyset$, and all items directly hold as $P_{\emptyset}=q_{\emptyset}=1$.
For the induction step, 
we first prove \Cref{item:shearer-induction-1}. 
For $J\subseteq V$ and $u\in J$,
denote $\Gamma^+(u)=\Gamma(u)\cup \{u\}.$ We have
\begin{align*}
  \Pr[\+P]{\+E_u\land\left( \bigwedge\limits_{j\in J\setminus \{u\}}\neg \+E_j\right)} & = \Pr[\+P]{\+E_u} \Pr[\+P]{\left( \bigwedge\limits_{j\in J\setminus \{u\}}\neg \+E_j\right)\mid \+E_u}
  = p_u \Pr[\+P]{\left( \bigwedge\limits_{j\in J\setminus \Gamma^+(u)}\neg \+E_j\right)\mid \+E_u}\\
   & = p_u \Pr[\+P]{\left( \bigwedge\limits_{j\in J\setminus \Gamma^+(u)}\neg \+E_j\right)} = p_u\cdot P_{J\setminus \Gamma^+(u)},
\end{align*}
where in the second equality we used the fact that $S$-SFO instances are extremal.
Thus,
\begin{align}\label{eq:marginal-lower-bound-1}
    P_J&=P_{J\setminus \{u\}}-\Pr[\+P]{\+E_u\land\left( \bigwedge\limits_{j\in J\setminus \{u\}}\neg \+E_j\right)} = P_{J\setminus \{u\}}-p_u\cdot P_{J\setminus \Gamma^+(u)}.
\end{align}
Also, by separating independent sets according to whether they contain $u$ or not, we have
\begin{align}
    q_J=&\sum\limits_{\substack{I\in \Ind(G)\\ I\subseteq J}}(-1)^{\abs{I}}\prod\limits_{i\in I}p_i \notag\\
    =&\sum\limits_{\substack{I\in \Ind(G)\\ I\subseteq J\setminus \{u\}}}(-1)^{\abs{I}}\prod\limits_{i\in I}p_i-p_u\cdot \sum\limits_{\substack{I\in \Ind(G)\\ I\subseteq J\setminus \Gamma^+(u)}}(-1)^{\abs{I}}\prod\limits_{i\in I}p_i\notag\\
    =&q_{J\setminus \{u\}}-p_u\cdot q_{J\setminus \Gamma^+(u)}.\label{eq:marginal-lower-bound-2}
\end{align}
Combining \eqref{eq:marginal-lower-bound-1}, \eqref{eq:marginal-lower-bound-2}, and the induction hypothesis,
we have that $P_J=q_J$.
For the positivity,
let $J\cap \Gamma^+(u)$ be listed as $\{u,u_1,\dots,u_k\}$ for some $k\le d(u)$.
For $0\leq i\leq k$, let $U_i=\{u,u_1,\dots,u_i\}$.
Then we have
\begin{align*}
    \frac{q_{J\setminus \{u\}}}{q_{J\setminus \Gamma^+(u)}} 
    = \prod_{i=1}^k \frac{q_{J\setminus U_{i-1}}}{q_{(J\setminus U_{i-1})\setminus \{u_i\}}}
    > 2^{-k} \ge 2^{-d(u)} = p_u,
\end{align*}
where the first inequality is by \Cref{item:shearer-induction-3} of the induction hypothesis.
Thus, $q_J>0$ by \eqref{eq:marginal-lower-bound-2} and \Cref{item:shearer-induction-1} holds.


It remains to show \Cref{item:shearer-induction-3}. 
Recall that $J\cap \Gamma^+(u)$ is listed as $\{u,u_1,\dots,u_k\}$. 
For any $0\le i\le k$, $\Gamma^+(u_i)\not\subseteq J\setminus U_i$ because $u\in \Gamma^+(u_i)$ and $u\notin (J\setminus U_i)$.
Then by the induction hypothesis on the second case of \Cref{item:shearer-induction-3},
\begin{align}\label{eqn:q_u_i}
    \frac{q_{(J\setminus U_i)\setminus \{u_i\}}}{q_{J\setminus U_{i}}} < \frac{3}{2}.
\end{align}
By \eqref{eq:marginal-lower-bound-2}, we have
\begin{equation}\label{eq:marginal-lower-bound-3}
\frac{q_J}{q_{J\setminus \{u\}}}=1-p_u\cdot \frac{q_{J\setminus \Gamma^+(u)}}{q_{J\setminus \{u\}}}=1-2^{-d(u)}\cdot \prod\limits_{i=0}^{k-1}\frac{q_{(J\setminus U_i)\setminus \{u_i\}}}{q_{J\setminus U_{i}}} \stackrel{\eqref{eqn:q_u_i}}{>} 1-2^{-d(u)}\cdot \left(\frac{3}{2}\right)^{k}.
\end{equation}
If $\Gamma(u)\subseteq J$, $k\le d(u)$ and as $d(u)\ge 3$,
\begin{align*}
    1-2^{-d(u)}\cdot \left(\frac{3}{2}\right)^{k} \ge 1- \left(\frac{3}{4}\right)^{d(u)} \ge \frac{37}{64} > \frac{1}{2}.
\end{align*}
If $\Gamma(u)\not \subseteq J$, $k\le d(u)-1$ and, again, as $d(u)\ge 3$,
\begin{align*}
    1-2^{-d(u)}\cdot \left(\frac{3}{2}\right)^{k} \ge 1- \frac{2}{3}\cdot\left(\frac{3}{4}\right)^{d(u)} \ge \frac{23}{32} > \frac{2}{3}.
\end{align*}
Together with \eqref{eq:marginal-lower-bound-3},
this finishes the proof of \Cref{item:shearer-induction-3} and the lemma.
\end{proof}

In the proof above, \Cref{item:shearer-induction-1} holds mainly because the instance is extremal.
For general non-extremal cases, we would have $\frac{P_J}{P_{J\setminus \{u\}}}\geq \frac{q_J}{q_{J\setminus \{u\}}}$ for $J\subseteq V$ and $u\in J$ instead.

\subsection{Independence polynomial at negative weights}\label{sec:ind-poly}

An interesting consequence of \Cref{item:shearer-induction-1} in the proof of \Cref{lemma:marginal-lower-bound} is that the number of SFOs can be computed using the independence polynomial evaluated at negative activities.
More specifically, similar to \eqref{eqn:ind-poly}, let 
\begin{align*}
    q_G(\mathbf{x}) = \sum\limits_{I\in \Ind(G)}\prod\limits_{u\in I}x_u,
\end{align*}
where $\mathbf{x}$ is a vector of weights for each vertex.
Then, $q_G(-\mathbf{p})=q_V$ where $q_V$ is defined in \eqref{eqn:ind-poly},
and thus $\abs{\Omega_V} = 2^{\abs{E}}q_G(-\mathbf{p})$, where $\mathbf{p}$ is the vector $(p_u)_{u\in V}$ of failure probabilities at the vertices.
Namely $p_u=2^{-d(u)}$ where $d(u)$ is the degree of $u$.

There are more than one FPTASes \cite{patel2017deterministic,HSV18} that can efficiently approximate the independence polynomial at negative weights.
%
These algorithms work in the so-called Shearer's region \cite{She85}.
To explain Shearer's region, let us abuse the notation slightly and extend the definition in \eqref{eqn:ind-poly} to a function $q_J(\mathbf{x})=\sum\limits_{I\in \Ind(G),I\subseteq J}\prod\limits_{u\in I}x_u$ to take an input weight vector $\mathbf{x}$.
Then, a vector $\mathbf{p}$ is in Shearer's region if and only if $q_S(-\mathbf{p})>0$ for all $S\subseteq V$.
\Cref{lemma:marginal-lower-bound} implies that the probability vector for SFOs is in Shearer's region.
Moreover, we say a vector $\mathbf{p}$ has slack $\alpha$ if $(1+\alpha)\mathbf{p}$ is in Shearer's region.
For a vector $\mathbf{x}$ with slack $\alpha$,
the algorithm by Patel and Regts \cite{patel2017deterministic} $\eps$-approximates $q_G(\mathbf{x})$ in time $(n/\eps)^{O(\log d/\alpha)}$, 
and the algorithm by Harvey, Srivastava, and Vondr\'{a}k \cite{HSV18} runs in time $(n/(\alpha \eps))^{O(\log d/\sqrt{\alpha})}$,
where $d$ is the maximum degree of the graph.
They do not recover \Cref{thm:sfo-deterministic-counting} as the slack is a constant when constant degree vertices exist.
If, in the meantime, some other vertices have unbounded degrees, these algorithms run in quasi-polynomial time instead.

To see the last point, we construct a graph that contains vertices of unbounded degrees but with constant slack for SFOs.
Consider the wheel graph $G$, which consists of a cycle $C_n$ of length $n$, and a central vertex $v$ that connects to all vertices of $C_n$.
Thus, $p_v=2^{-n}$ and $p_u=1/8$ for any $u$ in $C_n$.
For the cycle, as there are two SFOs, we see that $q_{C_n}(-\mathbf{1/4})=2^{-n+1}$ (where we use \Cref{item:shearer-induction-1} in the proof of \Cref{lemma:marginal-lower-bound}).
Thus, by \eqref{eq:marginal-lower-bound-2},
\begin{align*}
    q_{G}(-2\mathbf{p}) &= q_{C_n}(-\mathbf{1/4}) - 2p_v = 2^{-n+1} - 2\cdot 2^{-n}=0.
\end{align*}
Therefore, the slack here is at most $1$, despite the existence of a high degree vertex.

In summary, the existing FPTASes on the independence polynomial with negative weights do not handle the mixture of high and low degree vertices well enough for the case of SFOs.
However, it might provide an alternative approach to derive FPTASes to count solutions to extremal instances of the local lemma, which is worthy of further study.

\section{Concluding remarks}
\label{sec:conclusion}

Originally, Bubley and Dyer \cite{BD97a} introduced sink-free orientations as a special case of read-twice \textsc{Sat}.
Here, ``read-twice'' means that each variable in a CNF formula appears exactly twice, and it corresponds to an edge of the graph.
Vertices of the graph correspond to clauses of the formula.
The assignment of the edge is an orientation.
This represents that the variable appears with opposite signs in the formula.
In fact, Bubley and Dyer showed an FPRAS for all read-twice \#\textsc{Sat}.
It is natural to ask if they admit FPTAS as well.
This question was first raised by Lin, Liu, and Lu \cite{LinLL14},
who also gave an FPTAS for monotone read-twice \#\textsc{Sat} (which is equivalent to counting edge-covers in graphs).
The monotone requirement means that the two appearances of any variable have the same sign.
From this perspective, our FPTAS is complementary to that of \cite{LinLL14}.
However, as our techniques are drastically different from \cite{LinLL14},
to give an FPTAS for all read-twice \#\textsc{Sat}, one may need to find a way to combine these two techniques to handle mixed appearances of variables.

Another immediate question is to generalise our local sampler under the partial rejection sampling framework.
The first step would be to be able to handle degree 2 vertices for SFOs,
which breaks our current submartingale argument.
To go a bit further, a local sampler for all extremal instances would yield an FPTAS for all-terminal reliability, whose existence is a major open problem.
Also, for all-terminal reliability, one may also attempt to localise the near-linear time sampler in \cite{CGZZ24}.

Lastly, in addition to the discussion of \Cref{sec:ind-poly},
let us discuss another polynomial associated with SFOs and its zero-freeness.
Fix a SFO $\sigma$.
Let $p(x)=\sum_{i=0}^m C_i x^i$, where $m$ is the number of edges, and $C_i$ indicates how many SFOs exist that agree with $\sigma$ in exactly $i$ edges.
It is easy to evaluate $p(0)=1$, and $p(1)$ is the total number of SFOs.
However, for a cycle, this polynomial becomes $1+x^m$, which can have a zero arbitrarily close to $1$.
This zero defeats, at least, the standard application of Barvinok's method \cite{Bar16,patel2017deterministic}.
Although one could exclude cycles by requiring the minimum degree to be at least $3$ (like we did in this paper), current techniques of proving zero-freeness seem to hinge on handling all subgraphs.
For example, to use Ruelle's contraction like in \cite{GuoLLZ21}, one has to start from small fragments of the graph and gradually rebuild it. 
The obstacle then is to avoid starting from or encountering cycles in the rebuilding process.
Other methods, such as the recursion-based one in \cite{LSS19}, require hereditary properties (similar to the so-called strong spatial mixing) that break in cycles as well.
It would be interesting to see if any of our arguments can help in proving zero-freeness of the polynomial above.

\section*{Acknowledgement}
We thank Guus Regts for helpful comments on an earlier version of this paper.

\bibliographystyle{alpha}
\bibliography{references} 
\clearpage
\end{document}